%% file: main.tex
\documentclass[conference,compsoc]{IEEEtran}
\IEEEoverridecommandlockouts
\usepackage{myresearchpack}
\usepackage{tikz}
\usepackage{amsmath}
\usepackage{myresearchpack}
\usepackage{multicol}
\usepackage{balance}
\usepackage{tikz} 
\usetikzlibrary{positioning}
\usepackage{booktabs}
\usepackage{graphicx}
\usepackage{subcaption}
\usepackage{hyperref}
\usepackage{multirow}

\ifCLASSOPTIONcompsoc
  \usepackage[nocompress]{cite}
\else
  \usepackage{cite}
\fi

\newcommand{\negl}{\mathsf{negl}}
\newcommand{\PPT}{\textsf{PPT}}

\newcommand{\PCS}{\mathsf{PCS}}
\newcommand{\Prove}{\mathsf{Prove}}

\newcommand{\Rzkml}{\mathcal{R}_{\mathsf{zkml}}}

\newcommand{\zkmod}{\mathsf{\X}}

\begin{document}

\author{
\IEEEauthorblockN{
Pawan Kumar Sanjaya\IEEEauthorrefmark{1},
Christina Giannoula\IEEEauthorrefmark{2}\IEEEauthorrefmark{4}\thanks{\IEEEauthorrefmark{4}Work partially conducted while the author was at the University of Toronto.},
Valdy Oktavian\IEEEauthorrefmark{1},
Mehdi Saeedi\IEEEauthorrefmark{3},
Gabor Sines\IEEEauthorrefmark{3},\\
Gururaj Saileshwar\IEEEauthorrefmark{1},
Nandita Vijaykumar\IEEEauthorrefmark{1}
}
\IEEEauthorblockA{\IEEEauthorrefmark{1}University of Toronto, Canada}
\IEEEauthorblockA{\IEEEauthorrefmark{2}Max Planck Institute for Software Systems (MPI-SWS), Germany}
\IEEEauthorblockA{\IEEEauthorrefmark{3}Advanced Micro Devices (AMD)}
\thanks{\textcopyright~2026 Advanced Micro Devices, Inc. All rights reserved. AMD, the AMD Arrow logo, Radeon, and combinations thereof are trademarks of Advanced Micro Devices, Inc. Other product names used in this publication are for identification purposes only and may be trademarks of their respective companies.}
}

\shortname{zkComposer}
\title{\X: Decomposing Proof Construction to Scale zkML}

\maketitle

\begin{abstract}
Zero-knowledge machine learning (zkML) enables a server to perform verifiable inference while keeping model parameters private from the client. However, existing zkML systems incur prohibitive proof-generation costs. We observe that proof generation exhibits limited parallelism; that is, prover time does not decrease significantly as the number of threads increases. This limitation is because existing systems rely on monolithic proof computation, constructing a single proof for the entire machine learning model.

We introduce \X, a modular proof-construction framework that unlocks an additional dimension of parallelism, in addition to the parallelism in existing proof kernels. 
\X decomposes the zkML proof of correct inference into independent sub-proofs, each covering a subset of the computation for inference, \eg each independent sub-proof can cover a subset of contiguous layers in the ML model. 
Adjacent sub-proofs are cryptographically linked through shared commitments to the activations from the boundary layer. 
\X provides the same guarantees as the monolithic proof without requiring additional linking proofs or changes to the underlying cryptographic primitives.

We implement \X and evaluate it on three CNNs and GPT-2. We show that, on CNN workloads, \X reduces prover time and response time by up to $3.25\times$ relative to zkCNN~\cite{zkcnn}. On GPT-2, \X reduces these times by up to $4.83\times$ relative to zkGPT~\cite{zkgpt}, when partitioning along the model layers. When partitioning across both model layers and input sequences in GPT-2, we show that \X reduces prover time and response time by up to $6.84\times$ relative to zkGPT~\cite{zkgpt}.



\end{abstract}


\input{src/introduction}
\input{src/background}

\input{src/key_idea}

\input{src/prot_desc}
\input{src/security_analysis}

\input{src/discussion}
\input{src/evaluation}

\input{src/prior}

\input{src/conclusion}

\bibliographystyle{IEEEtran}
\bibliography{ref}
\appendices

\input{src/appendix}

\end{document}

%% file: src/introduction.tex
\section{Introduction}
\label{sec:introduction}


Machine learning (ML) inference systems are increasingly deployed in high-stakes domains including personalized recommendations~\cite{zhang2019deep}, fraud detection~\cite{abdallah2016fraud}, insurance claim processing~\cite{kuo2019deep}, and healthcare diagnostics~\cite{esteva2019guide}. In such settings, it is often critical that a response is generated by a specific model. Such a model may have been verified to satisfy formal properties such as fairness~\cite{MATOS2026101001}, or certified to meet a required quality of output. These guarantees do not hold if a different model is substituted. However, it is difficult to ensure that the service providers who host and deploy these models actually use the specified model~\cite{cai2025gettingpayforauditing}. For example, a provider could substitute a cheaper model in place of the one specified by the user. Revealing the model weights would let users rerun the inference and validate the response. However, weights are trade secrets, requiring significant financial resources to train (\eg $\$40$M for GPT-4~\cite{gpt_cost}) and cannot be revealed~\cite{zkcnn,zkgpt,zkml}. Thus there is a need for a mechanism that lets users verify that a response was generated by the specified model, without knowing the model weights.

Recent advances in cryptography, particularly zero-knowledge proofs (ZKPs)~\cite{11023367,libra,lasso}, offer a practical mechanism to enforce these guarantees.
A ZKP allows a \emph{prover} to convince a \emph{verifier} that a given computation was executed correctly \emph{without revealing} any private inputs, intermediate values, or other sensitive information beyond what is implied by the output alone~\cite{goldwasser1989knowledge}. When used to verify the model used in ML, such systems are commonly referred to as \emph{zkML}~\cite{lee2024vcnn,zkcnn,zkml,qu_verfcnn_2025,ghodsi2017safetynets,cryptoeprint:2024/162}. Figure~\ref{fig:setting} illustrates a typical zkML inference protocol in the server-client setting~\cite{zkcnn,zkml,zkgpt,zkllm,apollo}. The client submits a query $q$ to the server.
The server runs inference with the model specified by the client using its secret model parameters $\theta$ to obtain the response $r$.
It then returns $r$ and a proof $\pi$.
The client verifies $\pi$ to confirm that $r$ is the correct output of the specified model on the input $q$, without learning anything about the model parameters $\theta$. As in prior works~\cite{zkcnn,zkgpt,qu_verfcnn_2025,zkllm,zkml}, we assume that the model architecture itself is public, \ie known by the client.

\begin{figure}
    \centering
    \includegraphics[trim={15mm 8mm 20mm 2mm}]{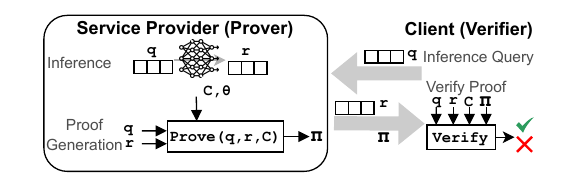}    
    \vspace{0.5cm}
    \caption{Overview of a zkML system: service provider generates a proof for inference.}
    \label{fig:setting}
\end{figure}

To generate the proof, the server and the client first encode the ML operations (\eg matrix multiplications, activations) as arithmetic constraints called the circuit $C$.
The inference server demonstrates that the response $r$ satisfies $C$, implying correct execution of the specified model on $q$. Here, the server is the prover and the client is the verifier. The model parameters $\theta$ are secret and known only to the server, \ie they must not be inferred by the client. Thus, the prover generates a \emph{commitment} to the model parameters and intermediate activations using a cryptographic commitment scheme. The commitment scheme enables the verifier to check that the underlying values satisfy $C$ while keeping those values hidden. In this way, zkML supports ML services where trust is grounded in cryptographic guarantees rather than blind faith in the service provider. 

Computing this cryptographic proof is substantially more expensive than ordinary inference. For example, our measurements show that zkGPT~\cite{zkgpt} requires $147.7$s to generate a proof for 64 tokens on a 192-core machine with 755 GB of RAM, while generating the corresponding model response on the same machine takes only 94.55ms. This proof-generation overhead severely degrades user experience in any deployed zkML system. Recent systems reduce this cost via ML- and cryptography-aware optimizations. Quantization~\cite{feng2021zen}, lookup-table approximations of non-linear functions~\cite{zkllm,zkgpt}, and efficient cryptographic representations of common ML kernels~\cite{feng2024zeno,zkcnn,zkgpt} lower the proving cost of each arithmetic operation. Leading frameworks such as zkGPT~\cite{zkgpt} also use \emph{circuit squashing}, which combines the computations of many model layers into a single circuit layer and reduces the overall circuit depth. Despite these advances, real-world zkML deployments still incur substantial proof-generation times.

\begin{figure}
    \centering
    \begin{adjustbox}{max totalsize={\linewidth-50pt}{\textheight}}
    \includegraphics[trim={25mm 5mm 20mm 5mm},]{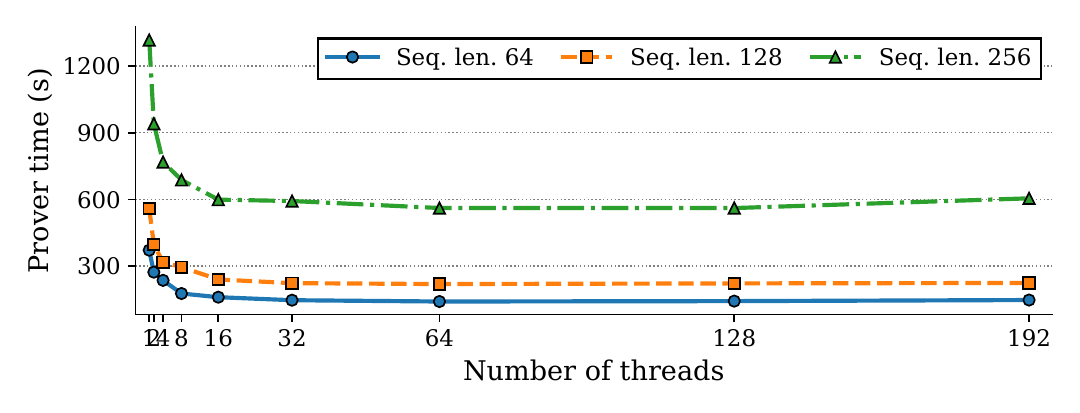}    
    \end{adjustbox}
    \caption{Prover time of zkGPT~\cite{zkgpt} does not decrease proportionally as the thread count increases.}
    \label{fig:zkgpt_scaling}
\end{figure}

Existing zkML systems~\cite{zkcnn,zkgpt,zkllm,qu_verfcnn_2025,zkml,ezkl} generate a single monolithic proof using an optimized arithmetic circuit for the entire model. This design exposes only intra-proof parallelism: multiple threads can execute the sumcheck protocol or compute commitments concurrently, but this parallelism saturates quickly. \Cref{fig:zkgpt_scaling} shows the prover time of zkGPT~\cite{zkgpt} when using different thread counts on a machine with 192 cores for several sequence lengths. Prover time stops decreasing beyond 16 threads because the workload is memory-bound, and additional threads stall on accesses to a large working set in main memory. The total memory footprint of the prover also grows with the size of the inference computation. For example, increasing the sequence length in zkGPT from $64$ to $256$ raises peak memory consumption of the prover from $240.5$ GB to $532.4$ GB. This restricts the models and input sizes for which proofs can be computed on memory-constrained accelerators and GPUs.

\noindent
\textbf{Goal.}
Our goal is twofold: 1) reduce zkML proof-generation latency, and 2) reduce the peak memory required for proof generation when using large input sizes, so that zkML can run on memory-constrained accelerators and GPUs. We propose \X, a new construction that, to our knowledge, is the first to expose parallelism \emph{across} independent sub-proofs in zkML frameworks. This reduces proof-generation time by complementing the parallelism exploited by prior work~\cite{zkgpt,zkcnn}.

In \X, we divide the inference computation along two dimensions: model layers and the input sequence. We refer to each resulting sub-computation as a \emph{partition}. When partitioning along the model layers, each partition contains the operations of a contiguous subset of layers. When partitioning along the input sequence, each partition contains the operations that compute the activations for a contiguous subset of input positions. Then, instead of generating a single proof for the entire model, we generate a separate proof for each partition. Thus, the circuit for each partition represents a subset of the operations required to compute the response. For example, when partitioning along the model layers, each sub-proof proves that the input was transformed using a subset of contiguous model layers. Within each partition, \ie within each sub-proof, \X retains an execution structure similar to that of prior works~\cite{zkgpt,zkcnn} and can still benefit from existing circuit-level optimizations such as circuit squashing. Across partitions, however, \X exposes sub-proof-level parallelism that is not available in a monolithic proof: different partitions can be proved concurrently. This can reduce end-to-end proof-generation time if all the sub-proofs together provide the same security guarantees as the monolithic proof. Thus, \X reduces proof-generation latency by proving partitions in parallel. Additionally, because each partition represents only a subset of the computation relative to the monolithic proof, each sub-proof requires less memory. The smaller memory footprint of each sub-proof also enables proof generation on memory-constrained accelerators and GPUs when each sub-proof is executed in sequence.

\noindent
\textbf{Challenges.}
Splitting proofs into multiple sub-proofs while providing the same security guarantees as existing zkML frameworks introduces two key challenges: (1) The integrity of the inference computation must be preserved across partition boundaries, \ie the proof must demonstrate that the output activations produced by partition $i$ (ordered from output to input) are \emph{equal} to the input activations in partition $i-1$; (2) The zero-knowledge property must be preserved, \ie the verifier must learn nothing about intermediate activations at partition boundaries beyond what is implied by the final output and model architecture.

To ensure the correctness of the computation across partition boundaries, we generate a single cryptographic commitment for the output activations of each partition and reuse this commitment in both sub-proofs of adjacent partitions. Since a commitment cryptographically binds to a unique value, this reuse creates a guarantee: the boundary activations used in both sub-proofs are identical. Specifically, the output activations produced in partition $i$'s proof are provably equal to the input activations used in partition $i-1$'s proof. To preserve zero knowledge, the commitments are masked with fresh random values. Here, the number of openings for each commitment across all sub-proofs is determined by the model architecture. Since the model architecture is public and known to both the prover and the verifier, the number of openings for each commitment is known at deployment. Thus, the prover uses this information to select a sufficient number of variables in the masking polynomial to mask all openings. This ensures that sharing commitments between sub-proofs reveals no additional information about model weights.

Our implementation of \X organizes proof generation into two phases: a sequential \emph{commitment phase} followed by a fully parallel \emph{proof phase}. In the commitment phase, the prover generates (masked) commitments to the boundary values of each partition. Once all boundary commitments are computed, the sub-proofs become entirely independent. As a result, the prover can generate all sub-proofs concurrently on separate processing cores.

\X can be implemented using the protocols of modern GKR-based zkML frameworks~\cite{qu_verfcnn_2025,zkcnn,zkgpt,zkllm}, and can be trivially extended to others~\cite{ezkl,zkml}. We implement \X using the arithmetization and polynomial commitment schemes from prior works, zkCNN~\cite{zkcnn} and zkGPT~\cite{zkgpt}. We compare our implementation with the respective baselines (zkCNN and zkGPT) by running them with the same hardware resources. For three CNNs, \X reduces prover time by up to $3.25\times$ relative to zkCNN with negligible impact on verifier time. Similarly, for GPT-2, \X achieves up to a $6.84\times$ prover-time speedup relative to zkGPT. \X also scales better with the number of available cores: its speedup over zkGPT increases from $2.87\times$ at 16 cores to $4.83\times$ at 192 cores with 12 partitions. Beyond exposing parallelism, this decomposition also reduces the prover's peak memory: generating sub-proofs sequentially keeps only one partition in memory at a time, yielding up to $8.1\times$ lower peak memory on GPT-2 (\S\ref{app:seqmem}).

\noindent\textbf{Contributions.} We make the following contributions:
\begin{enumerate}
\item We introduce \X, a modular proof construction for zkML inference that replaces the traditional monolithic proof with multiple independent sub-proofs. By partitioning the inference computation, \X exposes proof-level parallelism that is unavailable in prior zkML frameworks, while remaining compatible with existing circuit-level optimizations.
\item We design \X such that it provides the same security guarantees as a monolithic proof. \X uses shared commitments to enforce equality of activations across partition boundaries. It masks these commitments to preserve zero knowledge. We prove that \X preserves the guarantees of the underlying zkML proof system without requiring any changes to its cryptographic primitives.

\item We implement \X on top of zkCNN~\cite{zkcnn} and zkGPT~\cite{zkgpt} and evaluate it on three CNN models and GPT-2~\cite{gpt}. \X reduces prover time by up to $3.25\times$ for CNNs and up to $6.84\times$ for GPT-2 when computing the proofs in parallel. \X also reduces peak prover memory by up to $8.1\times$ in sequential sub-proof generation. These results show that partitioned proof generation can improve zkML latency and enable proving on more memory-constrained hardware.
\end{enumerate}

%% file: src/background.tex
\section{Background}
\label{sec:background}


\subsection{Expressing ML Operators in ZkML}
\label{subsec:indexed-relations}
ZkML inference systems (Figure ~\ref{fig:setting}) enable a service provider to prove to a client that a response $r$ to query $q$ was computed using a specified ML model with weights $\theta$. The service provider acts as the \emph{prover} (\prover) and the client as the \emph{verifier} (\verifier). The inference computation is represented as a layered arithmetic circuit $C$ consisting of $d$ layers, indexed from $0$ (output layer) to $d$ (input layer). Each layer comprises gates that perform arithmetic operations on the output values from the layer $i+1$ or the input layer $d$. Typically, each ML layer is represented as one or more layers in $C$.

To generate the proof, \prover demonstrates knowledge of a private witness $w = \{\mathsf{adv}, \theta\}$ (advice values and model weights) such that the public instance $x = (q, r)$ satisfies the circuit constraints, \ie $C$ evaluates to $1$ on input $x$ and witness $w$. The advice values are used to compute non-linear activation operations such as ReLU~\cite{zkgpt} and max pooling~\cite{zkcnn}. Public values $q$ and $r$ are known to both parties, while only \prover knows the private witness. This proof convinces \verifier that $r$ is indeed the correct output obtained by evaluating the specified model on input $q$. In modern zkML systems~\cite{zkcnn,zkml,zkgpt,zkllm}, the values $(q,r,\mathsf{adv},\theta,\ldots)$ are represented over a prime field ($\mathbb{F}_p$), \ie integers $(0,\ldots,p-1)$ with addition and multiplication modulo $p$. These values are encoded into low-degree polynomials (represented as $f(.), g(.)$) over the field, so the circuit ($C$) can be viewed as a collection of polynomial constraints that the instance and witness polynomials must satisfy.

zkML systems must support two distinct classes of operations~\cite{zkgpt}. (1) \textbf{Linear kernels} (\eg matrix multiplication and convolution) are represented directly by low-degree polynomial relations linking their input and output polynomials. \prover proves that these relations hold. (2) \textbf{Non-linear activations} are typically handled by adding auxiliary witnesses (advice values) and lookup-style constraints: \prover proves that evaluations of the corresponding witness polynomials lie in a public table and that these values are consistent with the arithmetic constraints for the operation. For example, ReLU can be enforced using auxiliary boolean/range witnesses that certify the sign/range of the input and a simple arithmetic relation tying the output to the input.

\noindent
\textbf{Notation.} We use $f(\cdot)$, $h(\cdot)$ for polynomials, $x, y, z$ to denote vectors of variables, and $g, r, u, v$ for vectors of values. Subscript notation ($x_i$) denotes the $i$-th element of the vector. We use the standard notation for bitstrings $\{0, 1\}^{*}$, and fields $\mathbb{F}_p$ with order $p$. For a $d$-variate polynomial $f(x_1,\cdots,x_d)$, $\deg_i(f)$ is the degree of $f$ in $x_i$. A polynomial is \emph{multilinear} if $\deg_i(f) \leq 1$ for all $i \in [d]$. All adversaries are Probabilistic Polynomial-Time (PPT), unless stated otherwise.

\subsection{GKR Protocol}
\label{subsubsec:gkr}
The GKR protocol~\cite{GKR15, virgo} is used by \prover to show that the evaluation of an inference computation represented by a layered arithmetic circuit ($C$) is correct. Modern zkML systems~\cite{zkcnn,zkgpt,zkllm} use GKR to prove ML computations. Using existing convention~\cite{zkcnn,libra}, $S_i$ is the number of gates at layer $i$ and $s_i = \log_2 S_i$. Each gate at layer $i$ is uniquely identified by a Boolean string in the hypercube $\{0,1\}^{s_i}$. 
$\tilde{V}_i$ encodes all the outputs of layer $i$, \eg for a gate at index $z$, $\tilde{V}_i(z)$ is the output of the corresponding gate in layer $i$. We use wiring predicates for the connections between the outputs of a layer $i+1$ and the gates in layer $i$: Boolean selector functions evaluating to $1$ only if there is a wire from layer $i+1$ (or $d$) to the corresponding gate.   The wiring predicate between addition gates in layer $i$ and values from layer $i+1$ is $\widetilde{\mathrm{add}}_{i,i+1}$, \ie $\widetilde{\mathrm{add}}_{i,i+1}(z,b)=1$ only if the gate at index $z$ operates on the value at index $b$ of layer $i+1$. The other wiring predicates $\widetilde{\mathrm{add}}_{i,d}$,$\widetilde{\mathrm{mult}}_{i,d,d}$, $\widetilde{\mathrm{mult}}_{i,i+1,d}$, and $\widetilde{\mathrm{mult}}_{i,i+1}$ are defined analogously. Using these selectors, the value at any index $z$ for layer $i$ ($\tilde{V}_i(z)$) can be expressed as a single polynomial~\cite{zkcnn}:
{\small
\begin{align}
\tilde{V}_i(z) 
=&\sum_{b \in \{0,1\}^{s_{i+1}}} \widetilde{\mathrm{add}}_{i,i+1}(z,b) \cdot \tilde{V}_{i+1}(b) \nonumber \\ +&\sum_{b \in \{0,1\}^{s_{d}}} \widetilde{\mathrm{add}}_{i,d}(z,b) \cdot \tilde{V}_{i,d}(b) \nonumber\\
   +&\sum_{b,c \in \{0,1\}^{s_{i+1}}} \widetilde{\mathrm{mult}}_{i,i+1}(z,b,c) \cdot \tilde{V}_{i+1}(b) \cdot \tilde{V}_{i+1}(c) \nonumber\\
  +&\sum_{b,c \in \{0,1\}^{s_{d}}} \widetilde{\mathrm{mult}}_{i,d,d}(z,b,c) \cdot (\tilde{V}_{i,d}(b) \cdot \tilde{V}_{i,d}(c)) \nonumber\\
  +&\sum_{\substack{
b \in \{0,1\}^{s_{i+1}} \\
c \in \{0,1\}^{s_d}
}} \widetilde{\mathrm{mult}}_{i,i+1,d}(z,b,c) \cdot \tilde{V}_{i+1}(b) \cdot \tilde{V}_{i,d}(c),
  \label{eq:modified-gkr-wiring}
\end{align}}
where $\tilde{V}_{i,d}$ is the subset of values in the input layer $d$ that are used for computations of layer $i$. When evaluated on Boolean inputs, exactly one of the $\mathrm{add}$ or $\mathrm{mult}$ terms equals $1$ for each gate~$z$. 
This property guarantees that the sum precisely recovers the arithmetic value computed by every gate in layer~$i$. Equivalently, verifying the correctness of layer~$i$ reduces to checking that Eq.~\eqref{eq:modified-gkr-wiring} holds.  GKR reduces the problem of verifying an entire layer-$i$ computation to checking the validity of Eq.~\eqref{eq:modified-gkr-wiring} at a single random point~$z$ chosen by \verifier ~\cite{GKR15}.

At the beginning of the GKR protocol, \prover sends the claimed output value $\tilde{V}_0$ to \verifier. \verifier then selects a random field element $g^{(0)} \in \mathbb{F}_p$, evaluates $\tilde{V}_0(g^{(0)})$, and sends $g^{(0)}$ to the prover. Then, \prover and \verifier reduce the claim about $\tilde{V}_0$ to a claim about $\tilde{V}_1$ using Eq. ~\eqref{eq:modified-gkr-wiring}. This process continues layer by layer until \verifier obtains a claim about the input layer ($\tilde{V}_d$). Each reduction step is performed using the interactive sumcheck protocol recursively on Eq. ~\eqref{eq:modified-gkr-wiring}~\cite{GKR15}. The final claim on the input layer is verified by giving \verifier oracle access to $\tilde{V}_d$. \verifier can query an oracle, a black-box function (see \S\ref{subsec:pcs}) that knows the polynomial $\tilde{V}_d$. If the value from the oracle query matches the final claim,  \verifier is convinced that the computation was performed correctly.

\subsection{Interactive Sumcheck Protocol}
\label{subsec:sumcheck}
The sumcheck protocol~\cite{sumcheck} allows \prover to show that the sum of the evaluations of a polynomial $h$ (with $m$ variables) over binary inputs is $H$. Formally, it proves~\cite{sumcheck}
\begin{align}
\label{eq:sumcheck}
H \stackrel{?}{=} \sum_{b\in\{0,1\}^m} h(b).    
\end{align}

It is a public-coin protocol where the random values sampled by \verifier in each round are sent to \prover~\cite{sumcheck}. It proceeds in $m$ rounds where in each round \prover proves that the claim made in the previous round is true using the random challenge values. In the first round, \prover sends a univariate polynomial~\cite{sumcheck}
\[
q_1(z) \;=\; \sum_{b_{2},\dots,b_m \in \{0,1\}} h(z, b_2,\dots,b_m).
\]
At the end of the round, \verifier checks that $H=q_1(0)+q_1(1)$ and sends a random value $r_1 \in \mathbb{F}$ to \prover. The claim for the next round is set to $q_1(r_1)$ and the same steps are repeated for $m$ rounds. Finally, \verifier has a claim $q_m(r_m)$. This claim is verified by performing an oracle query for the evaluation of $h$ at the point $(r_1, \dots, r_m)$. \verifier checks that this evaluation equals the claim ($q_m(r_m)$). If the equality holds,  \verifier is convinced of the original claim (Eq.~\eqref{eq:sumcheck}). 

When used in the GKR protocol, at the end of the sumcheck protocol on $\tilde{V}_0$, \verifier queries \prover for the values of $\tilde{V}_1$ at two random points $u$ and $v$~\cite{GKR15}. \verifier evaluates the wiring predicates and computes $\tilde{V}_0$ according to Eq.~\eqref{eq:modified-gkr-wiring}. It then verifies that this computed value matches the final claim in the sumcheck protocol. This verification step reduces the original claim for $\tilde{V}_0$ to two separate claims for $\tilde{V}_1$, which are combined into a single claim via random linear combination~\cite{zksumcheck}. The procedure repeats iteratively, eventually yielding a claim for $\tilde{V}_d$.

\subsubsection{Polynomial Commitment Scheme}
\label{subsec:pcs}
A Polynomial Commitment Scheme (PCS)~\cite{hyrax,spartan} emulates the oracle access in GKR  using 4 algorithms ($\mathsf{Setup}$, $\mathsf{Commit}$, $\mathsf{Open}$, and $\mathsf{Verify}$). It allows  \prover to commit (using $\mathsf{Commit}$) to a polynomial $f$, and later can only produce evaluations  of $f$ using that commitment (using $\mathsf{Open}$), \ie it cannot change the polynomial after commitment. \verifier learns nothing about the polynomial (\eg coefficients of the monomials) apart from the evaluation itself. Thus, by integrating a PCS as the oracle, \prover can convince  \verifier that the computation was correct. We formally define the algorithms in the PCS in \S\ref{sec:pcs_formal}. 

A PCS must satisfy (1) \emph{completeness}: honestly generated proofs verify; (2) \emph{binding}: a commitment $\mathsf{com}$ uniquely determines a degree-bounded polynomial $f$, such that every accepting opening $(\alpha,v,\pi)$ satisfies $v=f(\alpha)$; and (3) \emph{hiding}: the commitment $\mathsf{com}$ does not reveal $f$ beyond what is implied by opened evaluations.

\subsubsection{Zero-Knowledge GKR}
The standard GKR protocol provides soundness for the claimed computation, but it is not zero-knowledge~\cite{libra}. If it is executed naively, \verifier can learn information about the model weights $\theta$. There are two sources of information leak when executing the GKR protocol using sumcheck style arguments for each layer. First, it is from the univariate polynomials sent during each round of the sumcheck protocol. \verifier can learn the evaluation of each of those polynomials at a random point. These values represent linear combinations of the layer values $\tilde{V}_i$, which allows \verifier to partially reconstruct some entries of the layer. This is addressed by adding a random polynomial to Eq. ~\eqref{eq:modified-gkr-wiring} (see \S\ref{sec:zk_sumcheck}). Second, at the end of the sumcheck for each layer, \verifier learns the evaluation of $\tilde{V}_i$ at two points. Each point corresponds to a layer output. Thus, the polynomials for each layer are replaced with their low degree extensions (LDE) as follows~\cite{libra}: 
\begin{align}
    \dot{V}_{i}(z) = & \tilde{V}_{i}(z)+Z_{i}(z)\sum_{w \in \{0,1\}}R_{i}(z_1,w),
    \label{eq:lde}
\end{align}
\vspace{-0.75cm}
\begin{align}
    \textrm{ where } 
    &Z_i(z) =  \prod_{i=1}^{s_i}z_i(1-z_i) \textrm{ and } \nonumber \\
    R_i(z_1,w) = &a_0 + a_1z_1+a_2w+a_3z_1w+a_4z_1^2+ \nonumber\\ &a_5w^2+a_6z_1^2w^2 .
    \label{eq:masking-2}
\end{align}
$R_i(z_1,w)$ has a degree 2 for $a_0,\dots,a_6$ chosen randomly by \prover. Since $Z_i(z)=0$ for all $z\in\{0,1\}^{s_i}$, $\dot{V_i}$ agrees with $\tilde{V_i}$ over the boolean hypercube and can be substituted in Eq. ~\ref{eq:modified-gkr-wiring}. Each evaluation for $\tilde{V}_{i}$ is now masked by an evaluation of $R_i$. Thus, \verifier learns nothing about the individual values of layer polynomials from the evaluations.

\subsection{GKR-based Zk-SNARKs}
\label{subsec:zksnark}

Formally zkML proving can be represented via an indexed relation $\mathcal{R}$, where the index $I$ fixes the computation to be verified (\eg the layered circuit for a model), instance $x$ and witness $w$ are the public and private values, respectively.

\noindent
\textbf{Indexed Relation}~\cite{chiesa2020marlin}: An indexed relation $\mathcal{R}$ is a set of triples $(I, x, w)$ 
where $I$, $x$, $w$ are the index, instance, and witness, respectively, defined as:
\begin{align}
\mathcal{R}=\Big\{(I, x=\{q,r\}, w=\{\mathsf{adv},\theta\}) \;\big|\; C_I(x,w)=1\Big\}. \nonumber
\end{align}
The indexed language ($\mathcal{L}(\mathcal{R})$) consists of all pairs ($(I,x)$) for which there exists a witness ($w$) that satisfies the relation.

A zk-SNARK for relation $\mathcal{R}$ consists of three algorithms \\$(\mathsf{Setup}, \mathsf{Prove}, \mathsf{Verify})$~\cite{Groth16}. 
They allow \prover to show that $(I,x) \in \mathcal{L}(\mathcal{R})$ by proving knowledge of a witness $w$, without revealing $w$ to \verifier. A zk-SNARK satisfies three security properties (formal definitions in \S\ref{sec:zk-snark}): 1) \textit{Completeness}: honestly generated proofs are always accepted by \verifier; 2) \textit{Soundness}: false proofs cannot be accepted except with negligible probability; 3) \textit{Zero-knowledge}: \verifier learns nothing beyond validity of the statement (no information about $w$).

In zkML systems~\cite{zkcnn,zkgpt,zkllm},  zk-SNARKs are typically constructed using a zero-knowledge GKR protocol within a commit-and-prove framework~\cite{campanelli2019legosnark}, yielding a \emph{CP-SNARKs} (commit-and-prove SNARKs). \prover first commits (using a PCS) to the witness polynomials and then runs the zero-knowledge GKR protocol, replacing each interactive message with the corresponding committed polynomial evaluations or openings. 

\subsection{Fiat-Shamir For Non-interactive Protocols}
\label{sec:fs}

\begin{figure} [tb]
    \centering
    \includegraphics[trim={15mm 3mm 10mm 7mm}, scale=0.8]{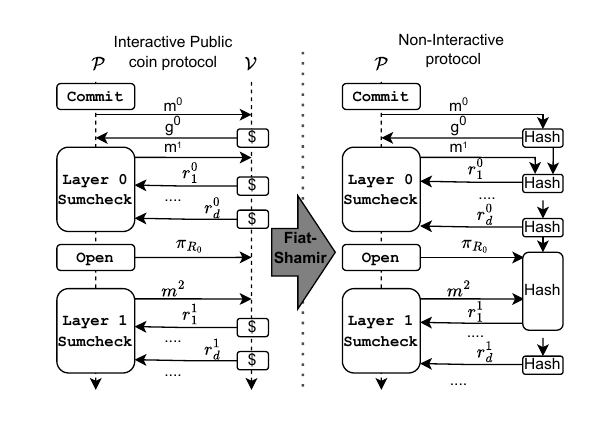}   \vspace{-0.3cm}
    \caption{Transforming an interactive public coin protocol into a non-interactive protocol using Fiat-Shamir heuristic.}
    \label{fig:fs}
\end{figure}

Fig.~\ref{fig:fs} (left) shows the interactive protocol used in zkML systems~\cite{zkgpt,zkcnn,zkllm}. First, \prover commits to the input and masking polynomials. The commitment is sent as part of $m^0$ to \verifier before starting the GKR protocol. During the protocol, \verifier samples random values ($g^{0}, r_1^0, \dots$) and sends them to \prover at specific points, \eg $g^0$ is sent only after \prover transmits $m^0$. In typical zkML deployments, such interaction is impractical.
zkML systems address this via the Fiat-Shamir heuristic~\cite{fiat1986prove}, converting the interactive protocol to a non-interactive one (Fig.~\ref{fig:fs}, right). It replaces \verifier's random sampling with a pseudo-random function, typically a hash such as SHA-3~\cite{sha3}. \prover can thus compute the challenges ($g^0, r_1^0, \dots$) locally by hashing the protocol transcript up to that point, \eg $g^0$ is obtained by hashing $m^0$. This yields a non-interactive zero-knowledge argument in the random oracle model~\cite{fiat1986prove} and is standard in zkML systems~\cite{zkcnn,zkllm,zkgpt}. Importantly, Fiat-Shamir remains sound only when \emph{every} prior prover message is included when computing the hash for each challenge~\cite{fiat1986prove, Thaler2022}, \eg $r_1^0$ must be computed by hashing $g^0$, $m^0$, and $m^1$. Thus, layer $i$'s proof cannot begin until all preceding layers' proofs are computed.

%% file: src/key_idea.tex
\section{\X: Key Ideas}
\label{sec:key_idea}

\begin{figure}[h]
\vspace{-0.75cm}
\begin{adjustbox}{max totalsize={\linewidth}{\textheight}}
\centering
\includegraphics[trim={10mm 2mm 7mm 2mm},clip]{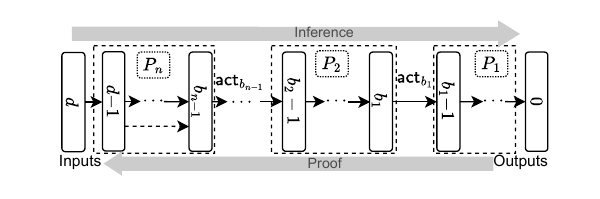}
\end{adjustbox}
\caption{Arithmetic circuit for \X with $n$ partitions.} 
\label{fig:zkmod}
\end{figure}
We accelerate zkML proof generation by exploiting the parallelism across independent sub-proofs. We divide the neural network into multiple partitions and generate a sub-proof for each. We describe our approach for partitioning along model layers; other schemes appear in \S\ref{sec:partitioning}. Each partition (indexed from output to input) consists of a subset of contiguous layers. Fig.~\ref{fig:zkmod} shows the arithmetic circuit of a zkML application with \X, where the model is split into $n$ partitions, $P_n,\cdots,P_1$. Each sub-proof shows that its partition correctly transforms its input activations into output activations using its layer operations (\eg matrix multiplication, convolution). To match the correctness guarantee of a monolithic proof, we additionally verify that the input activations of partition $P_{j-1}$ are equal to the output activations of $P_j$.


Our setting is an instance of SNARK composition, where sub-relations are linked by enforcing a relationship between shared boundary witnesses (as in LegoSNARKs~\cite{campanelli2019legosnark}). Linking sub-proofs in zkML poses two challenges. \textit{(1) Proving equality via polynomial commitments.} With a separate sub-proof per partition, the boundary activations are private witnesses, so each sub-proof commits to its own (masked) polynomials. Partition $P_{j-1}$'s input activations are committed together with its model parameters as a single polynomial. Thus, composing sub-proofs requires showing that the activations in this committed input polynomial are identical to $P_j$'s committed (masked) output polynomial, using only the commitments. \textit{(2) Preserving zero-knowledge at boundaries.} Since intermediate activations are private witnesses, \X must enforce equality of the boundary values across consecutive sub-proofs while \emph{revealing no additional information} beyond what the public values imply.

\X addresses both challenges by linking partitions through \emph{shared (masked) commitments} to output activations of the boundary layer in each partition. \prover creates a single PCS commitment to a partition's output activations and reuses it in every sub-proof that consumes them as inputs. The input-layer polynomial of partition $i$ is constructed such that its input activations are derived from (and checked against) this shared masked polynomial (see \S\ref{sec:boundary-poly}). Due to the binding property of the PCS, any correctly opened values must correspond to the same underlying boundary polynomial, thereby establishing equality of the polynomials. For zero knowledge, \X masks each boundary polynomial with a fresh random polynomial (see \S\ref{sec:zk-boundary}). The number of variables in the masking polynomial is chosen such that it masks 
the total number of evaluations of the corresponding polynomial across all sub-proofs. 

Prior proof-composition techniques~\cite{campanelli2019legosnark,campanelli2021lunar} use a separate zk-SNARK to relate two committed polynomials. LegoSNARKs~\cite{campanelli2019legosnark} formalizes modular composition via CP-SNARKs, providing linking gadgets that prove equality between two polynomial commitments. These gadgets assume both sub-proofs commit to the \emph{same} polynomial encoding of the boundary values. In zkML systems~\cite{zkgpt,zkcnn,zkllm}, however, the polynomials are committed as low-degree extensions (LDEs) masked by independent random polynomials per sub-proof. Thus, the committed polynomials differ even when the underlying witness is identical, making direct use of LegoSNARK's gadget non-trivial. Lunar~\cite{campanelli2021lunar} offers a comparable mechanism but targets univariate polynomials, whereas zkML systems~\cite{zkcnn,zkgpt,zkllm,qu_verfcnn_2025} typically operate over \emph{multivariate} polynomials, requiring a redesign to support multivariate equality checks.

\noindent
\textbf{Formal description of \X:}
Recall from \S\ref{sec:background} that existing zkML systems~\cite{zkcnn, zkgpt, zkllm} prove the relation:
\[
\mathcal{R}_{\mathsf{zkml}} = \Big\{ (I, x = \{q,r\}, w = \{\mathsf{adv}, \theta\}) \;\big|\; C_I(x,w) = 1 \Big\},
\]
where the index $I$ is the circuit, the public values $x$ consist of the query $q$ and response $r$. Here, the witness $w$ includes the model parameters $\theta$ and advice values $\mathsf{adv}$. \X generates proofs for partitions $P_1, \ldots, P_n$, defined by $n-1$ intermediate boundary layers $b_1 < \cdots < b_{n-1}$, each $b_j \in \{1,\ldots,d-1\}$. Setting $b_n = d$ (input layer) and $b_0 = 0$ (output layer), partition $P_j$ consists of layers $\{b_{j-1}, \ldots, b_{j}-1\}$ for $j \in \{1, \ldots, n\}$ (so $P_n$ includes the input layer and $P_1$ the output layer). The output of layer $b_{j-1}$, denoted $\mathsf{act}_{b_{j-1}}$, is the output of $P_{j}$ and the input to $P_{j-1}$. For ease of exposition, we use a common indexing for the ML model and its layered arithmetic circuit, and assume each circuit layer is verified via sumcheck applied to Equation~\eqref{eq:modified-gkr-wiring}. In practice, some layers are instead verified via lookup protocols~\cite{zkcnn, zkllm} (\eg Lasso~\cite{lasso}), which are themselves sumcheck-based; \S\ref{sec:protocol} describes \X's integration with such frameworks. The query $q$ is a public input in $P_n$ and the response $r$ is the public output of $P_1$. The model parameters are split into $\theta_1, \ldots, \theta_n$, each a private input to the proof of the corresponding partition, and the boundary activations are private witnesses in the two adjacent proofs. Formally, the indexed relations for the $n$ sub-proofs are:
\begin{align*}
\mathcal{R}_{1} &= \Big\{ (I_1, x_1 = \{r\}, w_1 = \{ w_{b_{1}} , \mathsf{adv}_1, \theta_1\}) \\
&\quad\;\big|\; C_{I_1}(x_1,w_1) = 1 \Big\}, \\
\mathcal{R}_{j} &= \Big\{ (I_j, x_j = \{\}, w_j = \{\mathsf{act}_{b_{j-1}}, w_{b_{j}}, \mathsf{adv}_j, \theta_j\}) \\
& \quad \;\big|\; C_{I_j}(x_j,w_j) = 1 \Big\}, \quad \text{for } j \in \{2, \ldots, n-1\}, \\
\mathcal{R}_{n} &= \Big\{ (I_n, x_n = \{q\}, w_n = \{\mathsf{act}_{b_{n-1}},\mathsf{adv}_n, \theta_n\}) \\
& \quad\;\big|\; C_{I_n}(x_n,w_n) = 1 \Big\}.
\end{align*}
If a prover produces accepting proofs for $\mathcal{R}_1, \ldots, \mathcal{R}_n$ and additionally shows $w_{b_j} = \mathsf{act}_{b_j}$ for all boundary layers $j \in \{1, \ldots, n-1\}$, this is equivalent to a proof for $\mathcal{R}_{\mathsf{zkml}}$~\cite{campanelli2019legosnark}.

\subsection{Proving Equality Using Polynomial Commitments: Boundary Layer Construction}
\label{sec:boundary-poly}
To verify equality between adjacent partitions $P_j$ and $P_{j+1}$, we must show that the output polynomial of $P_{j+1}$ equals a portion of the input polynomial of $P_j$. It equals only a portion because, in GKR-based zkML, the input polynomial incorporates both activation/query values and model weights. We achieve this by constructing an input polynomial for $P_j$ that distinctly separates these components. Let $\tilde{V}_{b_j}$ be the boundary polynomial for the output $\mathsf{act}_{b_j}$ of partition $P_{j+1}$ (where $1 \leq j < n$). For $P_j$, the input layer polynomial $\tilde{V}_{\mathsf{inp},j}$ combines $P_{j+1}$'s output polynomial $\tilde{V}_{b_j}$ with the model-weight polynomial $\tilde{V}_{\theta_j}$ for $P_j$. We combine them using chunking~\cite{zkgpt,gkr_chunking} via a selector variable in $\tilde{V}_{\mathsf{inp},j}$. Let
$$k_j = \log_2\left(2 \cdot \max(|\mathsf{act}_{b_j}|, |\theta_j|)\right),$$
where $(r_1, \ldots, r_{k_j-1})$ index positions in either $\tilde{V}_{b_j}$ or $\tilde{V}_{\theta_j}$. The input polynomial for $P_j$ is
\begin{align}
\tilde V_{\mathsf{inp},j}(r_1,\ldots,r_{k_j}) = &(1 - r_{k_j})\,\tilde V_{b_{j}}(r_1,\ldots,r_{k_j-1}) \nonumber\\
&+ r_{k_j}\,\tilde V_{\theta_j}(r_1,\ldots,r_{k_j-1}),
\label{eq:boundary-poly}
\end{align}
where $r_{k_j}$ selects which polynomial is evaluated. For each boundary layer $b_j$, \prover commits once to $\tilde V_{b_j}$ in the proof for $P_{j+1}$ (where $b_j$ is the output layer) and reuses it in the proof for $P_j$ (where $b_j$ is the input layer). \prover commits to $\tilde V_{\theta_j}$ separately in $P_j$'s proof. In $P_{j+1}$'s proof, \verifier queries $\tilde V_{b_j}$ directly. At the end of $P_j$'s proof, when performing oracle access to $\tilde V_{\mathsf{inp},j}$, \verifier queries $\tilde V_{b_{j}}$ and $\tilde V_{\theta_j}$ separately and computes $\tilde V_{\mathsf{inp},j}$ via Eq.~\ref{eq:boundary-poly}. By the binding property of the PCS, \prover cannot use $\textsf{PCS.Open}$ on the commitment of $\tilde V_{b_{j}}$ to prove evaluations of two different polynomials. Consequently, the values $w_{b_{j}}$ in $P_j$'s proof match the activations $\mathsf{act}_{b_{j}}$ from $P_{j+1}$'s proof. This enforces boundary-activation equality without any additional linking proofs, while remaining fully compatible with existing zkML systems~\cite{zkcnn, zkllm, zkgpt}.

\noindent
\textbf{Shared model weights between partitions.}
Two partitions $P_k$ and $P_j$ ($j<k$) may share weights when different layers reuse the same parameters. The shared values are included as input weight $\theta$ in the sub-proof for the partition that uses them first during inference, \ie $P_k$. They are then encoded, together with $P_k$'s output, into $P_k$'s boundary polynomial. Using this boundary polynomial as input to $P_j$ ensures both sub-proofs operate on the same weight values.

\subsection{Zero-Knowledge At Partition Boundaries}
\label{sec:zk-boundary}

Using LDEs of multilinear polynomials as in Eq.~\eqref{eq:lde} suffices to ensure the zero-knowledge properties of all polynomials in a sub-proof for partition $P_{j+1}$ except the output polynomial $\tilde{V}_{b_j}$. In \X, \verifier obtains oracle access to $\tilde V_{b_j}$ in at least two proofs: in $P_{j+1}$'s proof, to query the output of layer $b_j$, and in $P_j$'s proof, to verify the input-layer construction (Eq.~\ref{eq:boundary-poly}). Thus, without proper masking, these evaluations of the boundary polynomials reveal information about the underlying activations.

The number of evaluations of each boundary polynomial depends \emph{only} on the model architecture and the chosen partition boundaries. \prover can therefore compute it in advance. For the partitioning in Fig.~\ref{fig:zkmod}, every boundary polynomial is evaluated exactly twice for each boundary $b_j$: once in $P_j$ and once in $P_{j+1}$. This count can grow with certain operations, such as skip connections. Suppose a skip connection runs from layer $d-1$ to boundary $b_{n-1}$, with the two layers in different, non-contiguous partitions. Then the polynomial for layer $d-1$ must be evaluated three times, once in each of the three sub-proofs that depend on it. The third evaluation occurs in the partition containing $b_{n-1}$. In general, a boundary polynomial is evaluated as many times as the number of partitions that take it as input.

To preserve zero-knowledge, \prover selects the number of random variables in $R_{b_j}$ to mask \emph{all} evaluations of the boundary polynomial across all sub-proofs. \prover then replaces each $\tilde{V}_{b_j}$ with its LDE. For example, when $\tilde{V}_{b_j}$ is evaluated twice, $R_{b_j}$ includes monomials over two variables (Eq.~\eqref{eq:masking-2}). We discuss how to adapt $R_b$ for other cases in \S\ref{sec:mult_part}. Thus, Eq.~\ref{eq:boundary-poly} becomes:
\begin{align}
\dot V_{\mathsf{inp},j}(r_1,\ldots,r_{k_j}) = &(1 - r_{k_j})\,\dot V_{b_{j}}(r_1,\ldots,r_{k_j-1}) +\nonumber\\& r_{k_j}\,\dot V_{\theta_j}(r_1,\ldots,r_{k_j-1}),
\label{eq:zk-boundary-poly}
\end{align}
where $\dot V_{b_{j}}$ and $\dot V_{\theta_j}$ are the LDEs of $\tilde V_{b_{j}}$ and $\tilde V_{\theta_j}$. The LDEs agree with the original polynomials on all Boolean inputs, while masking the true values at all verifier-chosen evaluation points. This is sufficient to ensure zero-knowledge as shown in \S\ref{sec:zk-proof}.

%% file: src/prot_desc.tex
\section{\X: Putting It All Together}
\label{sec:protocol}
\label{sec:fs_zkmod}






Consider a model decomposed into two partitions at layer~$b$: $A$ (layers $\{0,\cdots,b-1\}$) and $B$ (layers $\{b,\cdots,d\}$). The protocol proceeds in three phases.

\emph{Boundary commitment.}
\prover executes the forward computation up to layer $b$ to obtain the intermediate activations $\mathsf{act}_b$. \prover then constructs the LDE $\dot V_b$ of the multilinear extension encoding $\mathsf{act}_b$ and commits to it using a polynomial commitment scheme. This commitment fixes the boundary values used consistently across both sub-proofs.

\emph{Proof for part $A$.}
\prover executes the remaining forward computation to obtain the intermediate activations and the response $r$. Using the committed boundary polynomial $\dot V_b$, \prover constructs the input polynomial for part $A$ as in §\ref{sec:boundary-poly}, replacing all multilinear extensions with their LDEs (§\ref{sec:zk-boundary}). \prover then executes the zero-knowledge GKR protocol for layers $\{0,\ldots,b-1\}$. At the end, the commitment of $\dot V_b$ is opened to prove the claim for the activations in the input of part $A$. This yields a zero-knowledge proof for $\mathcal{R}_A$, \ie part $A$ correctly computes the public response $r$ from the committed boundary activations and private weights $\theta_A$.

\emph{Proof for part $B$.}
 \prover executes the zero-knowledge GKR protocol for layers $\{b, \dots, d\}$, using the committed polynomial $\dot{V}_b$ as the output polynomial of layer $b$. At the start, \verifier selects a random evaluation point. \prover evaluates $\dot{V}_b$ and opens the commitment to $\dot{V}_b$ at that point. This evaluation is used as the claimed evaluation of layer $b$ for the protocol.

\verifier accepts if and only if both sub-proofs verify and all commitment openings are valid. Since \X modifies only the proof's structure and not the underlying GKR or commitment mechanisms, it applies directly to existing zkML frameworks without any modification to the commitment scheme. Protocol~\ref{prot:zkmod} gives a formal description.

\begin{figure*}[!h]
  \centering
\begin{protocol}{\X}
\label{prot:zkmod}
Let $C$ be a layered arithmetic circuit with layers indexed from $d$ (input) to $0$ (output), implementing a computation with parameters $\theta$. Let $q$ be the public query, $r$ the public response, and let $b \in \{1,\ldots,d-1\}$ denote the partition point. The prover \prover and verifier \verifier execute the following protocol.

\textbf{Setup.}
Public parameters $pp$ for the PCS and the zero-knowledge GKR protocol are generated as in §\ref{sec:background}.

\textbf{Boundary Commitment.}
\begin{enumerate}\itemsep0pt
    \item Upon receiving $q$, \prover evaluates the circuit up to layer $b$ and computes the MLE $\tilde V_b$ encoding $\mathsf{act}_b$.
    \item \prover constructs the LDE $\dot V_b(z)=\tilde V_b(z)+Z(z)\sum_{w\in\{0,1\}}R_b(z_1,w)$, where $Z(\cdot)$ and $R_b(\cdot)$ are defined in §\ref{sec:background}.
    \item \prover computes $\mathsf{com}_b\leftarrow\mathsf{PCS.Commit}(pp,\dot V_b, r_{V_b})$ and $\mathsf{com}_{R_b}\leftarrow\mathsf{PCS.Commit}(pp,R_b,r_{R_b})$. 
    \item \prover sends $(\mathsf{com}_b,\mathsf{com}_{R_b})$ to \verifier.
\end{enumerate}

\textbf{Sub-protocol A (Layers $0$ to $b{-}1$).}
\begin{enumerate}\itemsep0pt
    \item \prover computes $\mathsf{com}_{\theta_A}\leftarrow\mathsf{PCS.Commit}(pp,\dot V_{\theta_A}, r_{V,\theta_A})$ and sends $\mathsf{com}_{\theta_A}$ to \verifier.

    \item \prover and \verifier execute the zero-knowledge GKR protocol for layers $\{0,\ldots,b{-}1\}$ using the boundary input polynomial $\dot V_{\mathsf{inp},A}$ constructed from $\dot V_b$ and $\dot V_{\theta_A}$ (Eq.~\ref{eq:zk-boundary-poly}).

    \item At the end of the protocol, \verifier samples a random evaluation point $
    g^{(\mathsf{inp,A})} \xleftarrow{\$} \mathbb{F}$ for the input layer of part~$A$, and obtains the claimed value
    $\dot V_{\mathsf{inp},A}\!\left(g^{(\mathsf{inp,A})}\right)$
    from the final round of the sumcheck protocol. Let 

    \item \prover opens the corresponding commitments by sending $
    (\dot V_b(g^{(\mathsf{inp,A})}),\pi_{\mathsf{act}})
    \leftarrow
    \mathsf{PCS.Open}(pp,\dot V_b,g^{(\mathsf{inp,A})}, r_{V_b})$ and $
    (\dot V_{\theta_A}(g^{(\mathsf{inp,A})}),\pi_{\theta_A})
    \leftarrow
    \mathsf{PCS.Open}(pp,\dot V_{\theta_A},g^{(\mathsf{inp,A})}, r_{V,\theta_A})$.

    \item \verifier checks $
    \mathsf{PCS.Verify}(pp,\mathsf{com}_b,g^{(\mathsf{inp,A})},\dot V_b(g^{(\mathsf{inp,A})}),\pi_{\mathsf{act}})$ and $\mathsf{PCS.Verify}(pp,\mathsf{com}_{\theta_A},g^{(\mathsf{inp,A})},\dot V_{\theta_A}(g^{(\mathsf{inp,A})}),\pi_{\theta_A})$, 
    and aborts if either check fails.

    \item \verifier locally computes $\dot V_{\mathsf{inp},A}\!\left(g^{(\mathsf{inp,A})}\right)$
    from $\dot V_b(g^{(\mathsf{inp,A})})$ and $\dot V_{\theta_A}(g^{(\mathsf{inp,A})})$ according to Eq.~\ref{eq:zk-boundary-poly}, and checks that it matches the value obtained at the end of the sumcheck protocol.
\end{enumerate}

\textbf{Sub-protocol B (Layers $b$ to $d$).}
\begin{enumerate}\itemsep0pt
    \item \prover samples the random masking polynomials required for zero-knowledge in part~$B$. This includes the masking polynomials for the MLEs of layers $\{d,\ldots,b{+}1\}$, denoted $\{R_d,\ldots,R_{b+1}\}$, as well as the masking polynomials for the subsets of the input layer used by these layers, denoted $\{R_{d-1,d},\ldots,R_{b,d}\}$. \prover computes commitments to all these polynomials and sends the commitments to \verifier.

    \item \verifier samples a random challenge $g^{(b)} \xleftarrow{\$} \mathbb{F}$ and sends it to \prover.

    \item \prover opens the boundary commitment by sending $
    (\dot V_b(g^{(b)}), \pi_b) \leftarrow \mathsf{PCS.Open}(pp,\dot V_b,g^{(b)}, r_{V_b})$.

    \item \verifier checks $ \mathsf{PCS.Verify}(pp,\mathsf{com}_b,g^{(b)},\dot V_b(g^{(b)}),\pi_b)$, 
    and aborts if the check fails.

    \item \prover and \verifier execute a zero-knowledge sumcheck protocol to verify the following claim for layer $b$:
    \begin{align}
\dot{V}_b(g^{(b)}) 
=&\sum_{x \in \{0,1\}^{s_{b+1}}} 
\widetilde{\mathrm{add}}_{b,b+1}(g^{(b)},x) \cdot \dot{V}_{b+1}(x) +\sum_{x,y \in \{0,1\}^{s_d}} 
\widetilde{\mathrm{mult}}_{b,d,d}(g^{(b)},x,y) 
\cdot \tilde{V}_{b,d}(x) \cdot \dot{V}_{b,d}(y) \nonumber\\
&+\sum_{x \in \{0,1\}^{s_d}} 
\widetilde{\mathrm{add}}_{b,d}(g^{(b)},x) \cdot \dot{V}_{b,d}(x) 
+\sum_{x,y \in \{0,1\}^{s_{b+1}}} 
\widetilde{\mathrm{mult}}_{b,b+1}(g^{(b)},x,y) 
\cdot \dot{V}_{b+1}(x) \cdot \tilde{V}_{b+1}(y)  \nonumber\\ 
&+\sum_{\substack{
x \in \{0,1\}^{s_{b+1}} \\
y \in \{0,1\}^{s_d}
}}
\widetilde{\mathrm{mult}}_{b,b+1,d}(g^{(b)},x,y) 
\cdot \tilde{V}_{b+1}(x) \cdot \dot{V}_{b,d}(y)+\; Z_b(g^{(b)})\sum_{w\in\{0,1\}} R_b\!\big(g^{(b)}_1,w\big).
    \label{eq:vb_claim}
    \end{align}

    \item At the conclusion of the zero-knowledge sumcheck, \verifier obtains two claims $\dot V_{b+1}(u^{(b+1)})$ and $\dot V_{b+1}(v^{(b+1)})$. \prover opens the corresponding masking polynomial $R_b$ at $g^b_1,c$ for a verifier chosen random $c$, and \verifier checks that the claimed evaluations are consistent with the value from the oracle.
    \item \prover and \verifier continue executing the zero-knowledge GKR protocol for layers $\{b{+}1,\ldots,d\}$.
\end{enumerate}

\textbf{Decision.}
\verifier accepts iff both sub-protocols accept and all commitment verifications succeed.
\end{protocol}
\end{figure*}

The \X construction extends recursively to partition a model into more than two parts. Each additional partition introduces a new boundary commitment shared by adjacent sub-proofs. This decomposition reduces the size of each sub-circuit, lowering the memory required per sub-proof. When sub-proofs are generated sequentially, only one partition's witness values and prover instance must be stored at a time, reducing peak prover memory. When generated in parallel, \X instead reduces end-to-end proof generation time, subject to available hardware resources.

\noindent
\textbf{Using \X with lookup-based layers.}
Modern zkML systems such as zkGPT~\cite{zkgpt} and zkLLM~\cite{zkllm} use lookup arguments (\eg Lasso~\cite{lasso}) to prove activation functions and normalization layers. These lookup arguments are themselves implemented using sumcheck-based protocols and polynomial commitments. \X applies directly: when a partition boundary is a lookup-based layer, its boundary activations are treated \emph{in the same way} as standard arithmetic layer outputs. The committed boundary polynomial encodes the activation values, while the lookup constraints are enforced entirely within the sub-proof containing the corresponding layers. \X therefore requires no modification to existing lookup protocols and introduces no assumptions beyond those of existing zkML frameworks~\cite{zkllm,zkgpt}.

\X is a general construction: although primarily designed for zkML
systems that use GKR, it can be used to prove arbitrary computations
(\S\ref{sec:other_computation}) and with zkML systems that use
table-based arithmetic circuits (\S\ref{sec:plonk}).

\noindent
\textbf{Using Fiat--Shamir For Non-interactive Proofs.} To obtain a non-interactive protocol, we apply the Fiat--Shamir transformation (Section~\ref{sec:fs}) to each sub-protocol in the random-oracle model. Recent work~\cite{fs_attack} shows that Fiat--Shamir security depends on the circuit implementation. In particular, the circuit depth must be less than the sum of the computational depths required to compute the hash function and the PCS. The per-partition GKR sub-protocols in \X remain unchanged from the configurations analyzed in the underlying CP-SNARK constructions~\cite{libra,zkcnn,zkgpt}, which also use Fiat--Shamir compilation in the random-oracle model. We therefore assume the underlying CP-SNARKs satisfy this depth condition, \ie they are secure in the random-oracle model.

In \X, partitions are linked only through shared $\PCS$ commitments and their openings, \ie the Fiat--Shamir transcripts for each partition are independent. Thus, the depth condition applies independently to each sub-protocol and is unaffected by the number of partitions $n$. If an underlying circuit does not satisfy this condition, the partitioning scheme must ensure that each partition's depth remains below the maximum allowed for the corresponding hash function and commitment scheme used.

\subsection{Partitioning The Model}
\label{sec:partitioning}
The choice of partition boundaries is determined by the available hardware resources (\eg number of threads and memory) and by the structure of the proof circuit for each operation. The cost of each partition depends on the operations it contains, the associated kernel sizes, the model and input dimensions. We describe the three partitioning strategies we evaluate in \S\ref{sec:evaluation}.

\emph{Partitioning along layer boundaries in GPT-2~\cite{gpt}.} We use the same arithmetization scheme as zkGPT~\cite{zkgpt} and place partition boundaries between the transformer decode blocks. Thus, using this strategy, we can obtain a maximum of $12$ partitions (as there are 12 transformer blocks in GPT-2). We apply the circuit optimizations proposed by zkGPT~\cite{zkgpt} within each partition.

\emph{Partitioning along input tokens in GPT-2~\cite{gpt}.} For LLMs, \X can also partition along the input tokens. The key--value (KV) activations computed for a token are used to compute the attention scores of every subsequent token. We therefore commit to the intermediate KV values produced within each token partition, and subsequent partitions consume these commitments as advice in their input layer. The shared-commitment mechanism is used between partitions (\S\ref{sec:key_idea}) to ensure consistency between the advice KV values and the values produced by earlier partitions. Thus, combining the partitions along the input tokens and the layer boundaries allows additional proof computations to be run in parallel, thereby decreasing prover time.

\emph{Partitioning along layer boundaries in CNNs.} We use the same arithmetization as zkCNN~\cite{zkcnn} and place partition boundaries immediately after each convolution block. Here, a convolution block consists of a sequence of convolution layers followed by a max-pooling layer. Thus, we obtain a maximum of $5$ partitions for the models we evaluation.  

The choice of partition boundaries and count $K$ controls a memory-compute trade-off for the prover. Increasing $K$ reduces per-partition memory and can reduce prover time when sub-proofs run in parallel, but it also introduces additional boundary commitments and reduces the number of threads available to each sub-proof. Within a single partition scheme, three failure modes can degrade performance: 1) an uneven split leaves the largest sub-proof on the critical path; 2) fine-grained partitions incur commitment and opening overhead that can offset the performance improvements from \X; and 3) when the partitions are run in parallel the total memory required can be above the monolithic baseline.

For GPT-2, the transformer blocks are structurally identical, so any partition count yields equal-cost sub-proofs. For CNNs, where per-layer cost varies, \X resolves the partitioning problem by exhaustively profiling the feasible configurations offline and selecting the one with the lowest prover time. This is a feasible approach since the profiling is only a one-time cost for deploying any given model. The same procedure applies to any new model: choose partition boundaries aligning with the natural structural units (\eg transformer, convolution, or residual blocks). The boundaries must be chosen such that constraint-merging optimizations (similar to zkGPT~\cite{zkgpt}) can be applied within each partition. Offline profiling can be used to choose the best configuration for deployment if the partitions are not balanced. For models with decomposable computation along the inputs (\eg LLMs), input-dimension partitioning adds further parallelism.

%% file: src/security_analysis.tex
\section{Security Properties Of \X}
\label{sec:zk-proof}
Here, we prove that Protocol~\ref{prot:zkmod} satisfies the completeness, knowledge soundness, and zero-knowledge properties of a zk-SNARK for $\Rzkml$ (formal definitions in \Cref{def:zksnark}). Arguments for multiple partitions obtained by recursive application of \X follow similarly. Intuitively, \X splits the overall computation into sub-proofs and instantiates a standard zk-SNARK for each. We state the completeness and knowledge soundness theorem below, which follows directly from the underlying zk-SNARK construction and the PCS~\cite{campanelli2019legosnark}. We then sketch the proof that Protocol~\ref{prot:zkmod} is zero-knowledge. We discuss the asymptotic complexity and the expanded proofs in \S\ref{sec:proof}.

\begin{theorem}[Knowledge Argument of \(\Pi_{\zkmod}\)]
\label{thm:knowledge-argument-twopart}
Let \(\Pi_{\zkmod}=(\mathsf{Setup},\mathsf{Prove},\mathsf{Verify})\)
be the \X protocol for two partitions as shown in Protocol~\ref{prot:zkmod}, for the
indexed relation \(\Rzkml\). Let the two sub-proofs correspond to the indexed
relations \(\mathcal{R}_A\) and \(\mathcal{R}_B\), where \(\mathcal{R}_A\)
proves correctness of layers \(\{0,\ldots,b-1\}\) and \(\mathcal{R}_B\)
proves correctness of layers \(\{b,\ldots,d\}\). Suppose that:
\begin{enumerate}
    \item \(\PCS\) satisfies completeness and binding
    (Definition~\ref{def:pcs});
    \item the GKR-based CP-SNARKs underlying the sub-proofs for
    \(\mathcal{R}_A\) and \(\mathcal{R}_B\) satisfy completeness and
    knowledge soundness (Definition~\ref{def:zksnark}).
\end{enumerate}
Then \(\Pi_{\zkmod}\) satisfies completeness and knowledge soundness for
\(\Rzkml\). Equivalently, \(\Pi_{\zkmod}\) is a knowledge argument for
\(\Rzkml\).
\end{theorem}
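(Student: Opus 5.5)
Completeness and knowledge soundness are handled separately, each reduced to the hypotheses on the sub-proofs for \(\mathcal{R}_A\) and \(\mathcal{R}_B\) and on the \(\PCS\).

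\textbf{Completeness.} Take an honest prover holding a valid witness \(w=\{\mathsf{adv},\theta\}\) for \((I,x=\{q,r\})\in\mathcal{L}(\Rzkml)\). It computes \(\mathsf{act}_b\) honestly, so \(\dot V_b\) agrees with \(\tilde V_b\) on \(\{0,1\}^{s_b}\). The split witness then satisfies both sub-relations:
\begin{itemize}
\item \((\mathsf{act}_b,\mathsf{adv}_A,\theta_A)\) satisfies \(\mathcal{R}_A\), with input polynomial \(\dot V_{\mathsf{inp},A}\) built as in Eq.~\eqref{eq:zk-boundary-poly};
\item \((\mathsf{act}_b,\mathsf{adv}_B,\theta_B)\) satisfies \(\mathcal{R}_B\) with public input \(q\).
\end{itemize}
By completeness of the two CP-SNARKs, both sub-proofs are accepted. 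By \(\PCS\) completeness, every opening of \(\mathsf{com}_b\), \(\mathsf{com}_{R_b}\) and \(\mathsf{com}_{\theta_A}\) verifies. The final check in Sub-protocol A holds because Eq.~\eqref{eq:zk-boundary-poly} is an identity. The added masking term in Eq.~\eqref{eq:vb_claim} is exactly what the honest LDE contributes at \(g^{(b)}\). Hence the verifier accepts with probability \(1\).

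\textbf{Knowledge soundness.} Given a PPT prover \(\mathcal{P}^*\) that makes \(\Pi_{\zkmod}\) accept with non-negligible probability, I construct an extractor \(\mathcal{E}\) in three steps.

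First, I view \(\mathcal{P}^*\) together with the fixed boundary commitment \(\mathsf{com}_b\) as two derived provers \(\mathcal{P}^*_A\) and \(\mathcal{P}^*_B\), one for each sub-protocol. Sub-protocol A treats \(\mathsf{com}_b\) as a commitment to part of its input witness. Sub-protocol B treats it as the committed output of layer \(b\). Because the Fiat--Shamir transcripts are independent and linked only through \(\mathsf{com}_b\), each derived prover is accepted with at least the probability that \(\mathcal{P}^*\) is accepted.

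Second, I run the knowledge-soundness extractors of the CP-SNARKs. This is the commit-and-prove setting of~\cite{campanelli2019legosnark}, so the extractors output witnesses consistent with the committed polynomials:
\begin{itemize}
\item \(\mathcal{E}_A\) outputs \((w_b,\mathsf{adv}_A,\theta_A)\) with \(C_{I_A}=1\) on public output \(r\);
\item \(\mathcal{E}_B\) outputs \((\mathsf{act}_b,\mathsf{adv}_B,\theta_B)\) with \(C_{I_B}=1\) on public input \(q\).
\end{itemize}
Third, I set \(w=\{\mathsf{adv}_A\cup\mathsf{adv}_B,\ \theta_A\cup\theta_B\}\) and output it.

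The output is valid if \(w_b=\mathsf{act}_b\). In that case, composing the two sub-circuits along layer \(b\) reproduces \(C_I\), so \(C_I(x,w)=1\). This is the composition claim stated after the definitions of \(\mathcal{R}_1,\ldots,\mathcal{R}_n\).

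\textbf{Main obstacle: showing \(w_b=\mathsf{act}_b\).} Suppose \(w_b\neq\mathsf{act}_b\) with non-negligible probability. Both values are read off polynomials bound to the same \(\mathsf{com}_b\): \(\mathcal{E}_A\)'s witness comes from openings at \(g^{(\mathsf{inp},A)}\), and \(\mathcal{E}_B\)'s from openings at \(g^{(b)}\). I would build an adversary against \(\PCS\) binding. It runs \(\mathcal{P}^*\) and both extractors, and outputs \(\mathsf{com}_b\) together with two distinct degree-bounded polynomials \(f_A\neq f_B\), each with an accepting opening. This contradicts binding.

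Two details need care:
\begin{itemize}
\item The masking term \(Z_b\sum_w R_b\) does not affect the Boolean-hypercube values. I would argue directly that two LDEs of the form Eq.~\eqref{eq:lde} that differ on the hypercube are distinct polynomials.
\item The extractors must be run jointly on the same \(\mathsf{com}_b\). I would rewind both sub-protocols from the common prefix that ends after the boundary commitment phase. Alternatively, I would invoke the CP-SNARK extractability definition, under which the commitment opening is extracted along with the witness.
\end{itemize}
The hypotheses do not supply commitment extractability separately, so I expect the second point to require the most care.
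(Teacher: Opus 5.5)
Your proposal matches the paper's proof. Completeness follows from PCS completeness together with completeness of the two CP-SNARKs. Knowledge soundness runs the $\mathcal{R}_A$ and $\mathcal{R}_B$ extractors, uses binding of the shared $\mathsf{com}_b$ and the agreement of $\dot V_b$ with $\tilde V_b$ on the hypercube to get $w_b=\mathsf{act}_b$, and outputs $(\theta_A,\theta_B,\mathsf{adv}_A,\mathsf{adv}_B)$. You flag two subtleties: the extractors must be run jointly on the same $\mathsf{com}_b$, and the same-point binding of Definition~\ref{def:pcs} does not by itself tie openings at $g^{(\mathsf{inp},A)}$ and $g^{(b)}$ to one polynomial. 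The paper addresses neither; it simply asserts that binding makes all accepting openings of $\mathsf{com}_b$ consistent with a single polynomial, so your version is, if anything, more careful.
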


\begin{theorem}[Zero-Knowledge of \(\X\)] \label{thm:zk} Let \(\X\) be the two-partition protocol in Protocol~\ref{prot:zkmod}, with a single boundary layer \(b\), and let \(\mathcal{R}_{\mathsf{zkml}}\) be the original indexed relation. Suppose that: \begin{enumerate} \item the GKR-based CP-SNARK sub-proofs for \(\mathcal{R}_A\) and \(\mathcal{R}_B\) are zero-knowledge (Definition~\ref{def:zksnark}), with simulator \(\mathcal{S}_{\mathsf{sc}}\); \item \(\PCS\) satisfies hiding (Definition~\ref{def:pcs}), with simulator \(\mathcal{S}_{\mathsf{pc}}\). \end{enumerate} Then \(\X\) is zero-knowledge for \(\mathcal{R}_{\mathsf{zkml}}\). That is, there exists a \(\PPT\) simulator \(\mathcal{S}_{\mathsf{mod}}\) such that for every $(\mathsf{i},\mathsf{x},\mathsf{w})\in\mathcal{R}_{\mathsf{zkml}}$ and every \(\PPT\) distinguisher \(\mathcal{D}\), \[ \left| \Pr[\mathcal{D}(\pi)=1] - \Pr[\mathcal{D}(\pi')=1] \right| \leq \negl(\lambda), \] where \[ \mathsf{pp}\leftarrow\mathsf{Setup}(1^\lambda,\mathsf{i}), \pi\leftarrow\mathsf{Prove}(\mathsf{pp},\mathsf{i},\mathsf{x},\mathsf{w}), \] and \[ \pi'\leftarrow \mathcal{S}_{\mathsf{mod}}(\mathsf{pp},\mathsf{i},\mathsf{x}). \] \end{theorem}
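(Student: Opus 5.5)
We construct $\mathcal{S}_{\mathsf{mod}}$ by composing the simulators named in the hypotheses, and we argue indistinguishability with a sequence of hybrids. The one new ingredient is a lemma about the shared boundary polynomial. Given $(\mathsf{pp},\mathsf{i},\mathsf{x}=(q,r))$, the simulator proceeds in three steps.

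First, it invokes $\mathcal{S}_{\mathsf{pc}}$ to produce simulated commitments $\mathsf{com}_b$ and $\mathsf{com}_{R_b}$. It then samples the three values that will be revealed about the boundary polynomial, $\dot V_b(g^{(\mathsf{inp,A})})$, $\dot V_b(g^{(b)})$, and $R_b(g^{(b)}_1,c)$, independently and uniformly from $\mathbb{F}$.

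Second, it runs $\mathcal{S}_{\mathsf{sc}}$ for $\mathcal{R}_A$, using public output $r$. Wherever the real sub-proof would open the boundary commitment, it programs the simulated evaluation sampled in the first step.

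Third, it runs $\mathcal{S}_{\mathsf{sc}}$ for $\mathcal{R}_B$, using public input $q$, and supplies the sampled value $\dot V_b(g^{(b)})$ as the claimed output of layer $b$. All opening proofs are generated by $\mathcal{S}_{\mathsf{pc}}$, consistent with the sampled values.

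In the random-oracle setting, the challenges $g^{(\mathsf{inp,A})}$ and $g^{(b)}$ come from independent Fiat--Shamir transcripts (\S\ref{sec:fs_zkmod}). The simulator can therefore program each oracle separately.

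The hybrids run as follows.
\begin{itemize}
\item $H_0$ is the real proof $\pi$.
\item $H_1$ replaces every PCS commitment and opening proof with the output of $\mathcal{S}_{\mathsf{pc}}$, keeping the true evaluation values. $H_0\approx H_1$ follows from hiding, via a hybrid over the polynomially many commitments.
\item $H_2$ replaces the transcript of sub-proof $A$, minus the boundary opening, with the output of $\mathcal{S}_{\mathsf{sc}}$ conditioned on the true value $\dot V_b(g^{(\mathsf{inp,A})})$.
\item $H_3$ does the same for sub-proof $B$, conditioned on the true $\dot V_b(g^{(b)})$ and $R_b(g^{(b)}_1,c)$.
\end{itemize}
These two steps use zero-knowledge of the sub-proofs. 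Subtlety: the standard ZK definition simulates given only the instance, but here the boundary evaluation is a private-witness value. I would therefore view each sub-proof as a CP-SNARK for an instance that includes the boundary commitment, and view its opened evaluation as part of what the simulator is given. This is the commit-and-prove view of \S\ref{subsec:zksnark}. Zero-knowledge of the CP-SNARK then says the rest of the transcript is simulatable from the commitment and that evaluation.

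The main obstacle is the final step, $H_3\approx H_4$, where $H_4$ replaces the true boundary evaluations with uniform values; this is the algebraic lemma. Write $\dot V_b(z)=\tilde V_b(z)+Z_b(z)\bigl(R_b(z_1,0)+R_b(z_1,1)\bigr)$. I would show that the joint distribution of $\dot V_b$ at two verifier points $g,g'$ with $Z_b(g),Z_b(g')\neq0$, together with one opening $R_b(g'_1,c)$, is uniform on $\mathbb{F}^3$ over the choice of the coefficients $a_0,\dots,a_6$. The three values are affine in $(a_0,\dots,a_6)$, so it suffices that the corresponding $3\times 7$ coefficient matrix has rank $3$. This reduces to showing that the linear forms $R_b(g_1,0)+R_b(g_1,1)$, $R_b(g'_1,0)+R_b(g'_1,1)$, and $R_b(g'_1,c)$ are linearly independent. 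The degree-2 monomials in $z_1$ and $w$ are what make this hold.

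The rank fails only when $g_1=g'_1$, when some $Z_b$ vanishes, or when $c$ hits a root of a low-degree polynomial. Because the points are random, each of these events has probability $O(s_b/|\mathbb{F}|)$ by Schwartz--Zippel, which is negligible. Given rank $3$, the revealed values are uniformly random and independent of $\tilde V_b$, so $H_3$ and $H_4$ are statistically close. Since $H_4$ is exactly the output of $\mathcal{S}_{\mathsf{mod}}$, a union bound over the hybrids gives the $\negl(\lambda)$ bound. For more openings, such as skip connections, the same argument goes through with more variables in $R_b$, as noted in \S\ref{sec:zk-boundary}.
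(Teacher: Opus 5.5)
Your proposal is correct and follows essentially the same route as the paper. Both build a simulator from $\mathcal{S}_{\mathsf{sc}}$ and $\mathcal{S}_{\mathsf{pc}}$, sample the two boundary evaluations and $R_b(g^{(b)}_1,c)$ uniformly, run a hybrid argument, and rest on the key lemma that the $3\times 7$ coefficient matrix sending $(a_0,\dots,a_6)$ to the revealed mask values has rank $3$. The differences are minor: you bound the degenerate challenges by Schwartz--Zippel where the paper resamples them, and you make explicit both the $Z_b\neq 0$ condition and the commit-and-prove conditioning of $\mathcal{S}_{\mathsf{sc}}$, which the paper leaves implicit.

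One small correction to your aside: the degree-$2$ monomials are not what give rank $3$ here. The $a_0,a_1,a_2$ columns already suffice, since their $3\times 3$ minor equals $2(v_1-u_1)(2c-1)$, where $u=g^{(b)}$ and $v=g^{(\mathsf{inp},A)}$.
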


\begin{proof}[Proof sketch]
We construct a simulator \(\mathcal{S}_{\mathsf{mod}}\) by combining the
simulators for the two underlying components: the simulator
\(\mathcal{S}_{\mathsf{sc}}\) for the GKR-based CP-SNARK transcripts and the simulator \(\mathcal{S}_{\mathsf{pc}}\) for the polynomial commitments.

All transcript components internal to the two sub-protocols are simulated
exactly as in the standard commit-and-prove GKR zero-knowledge argument. Thus,
by the zero-knowledge of the sub-proofs and the hiding of the PCS, these parts
are indistinguishable from a real execution~\cite{libra}.

In \X, the two sub-protocols share the boundary
polynomial \(\dot V_b\). Thus, the verifier sees two evaluations of this polynomial,
at query points \(u=g^{(b)}\) and \(v=g^{(\mathsf{inp},A)}\), together with one
evaluation of the boundary masking polynomial \(R_b(u_1,c)\). The boundary mask
is chosen so that these three revealed values are three independent linear
combinations of the random masking coefficients, \ie the corresponding linear map has a full
rank. Hence the revealed values are distributed uniformly over
\(\mathbb{F}_p^3\) and are statistically independent of the boundary
polynomial \(\widetilde V_b\).

\(\mathcal{S}_{\mathsf{mod}}\) samples these three boundary-related
values uniformly, uses \(\mathcal{S}_{\mathsf{pc}}\) to simulate the relevant
commitments and openings, and uses \(\mathcal{S}_{\mathsf{sc}}\) to simulate
the GKR/sumcheck transcripts consistently with the sampled values. By a standard hybrid argument, the simulated transcript has the same distribution on the boundary messages as the real transcript. Therefore, \X satisfies zero-knowledge.
\end{proof}

%% file: src/evaluation.tex
\section{Evaluation}
\label{sec:evaluation}

\subsection{Implementation}
\textbf{Software:} We evaluate \X by integrating it into two modern zkML frameworks: zkCNN~\cite{zkcnn} and zkGPT~\cite{zkgpt}. \X leaves the underlying arithmetization and the PCS unchanged. Accordingly, we use Hyrax~\cite{hyrax} as the PCS for both boundary and input polynomials. The commitment kernels are accelerated via Pippenger's algorithm~\cite{pippenger1976evaluation}, while the sumcheck kernels leverage multi-threaded parallelism. All field and curve operations use the mcl library~\cite{mcl} over the BLS12-381~\cite{bls12381} and BN254~\cite{bn254} curves. We optimized the implementation to dynamically allocate the necessary memory for intermediate computations in the prover and release it immediately after use.



\textbf{Metrics.} We compare the impact of \X using the following metrics: prover time, peak memory, verifier time, and response time. The prover time includes the duration to commit to all witness values, \eg model weights and activations, to generate the proof by executing the GKR protocol, and opening the respective commitments. Peak memory denotes the total memory required to store witness values and intermediate values during proof generation, \eg tables used to compute sumcheck protocol messages via dynamic programming. The verifier time is the duration for the verifier to execute the verification algorithm. The response time is calculated as the prover time plus the transmission time for the proof over a 100 MBps network. We also report speedups by normalizing the prover/verifier/response time of \X with the respective baselines.

\textbf{Hardware:} All experiments were conducted on a system equipped with dual-socket AMD EPYC{\texttrademark}~9654 processors, featuring 192 cores, and 756 GB of main memory in total. In our experiments, the $K$ sub-proofs are executed in parallel using a total of 192 threads (one thread per core), \ie each sub-proof uses $\lfloor192/K\rfloor$ threads. Thus, \X and the baseline use \emph{identical} hardware resources. Although our experiments were conducted using CPUs, \X can be implemented directly in GPU-based frameworks. 

\subsection{Performance of \X on CNNs}
\begin{table}[t]
\centering\small
\setlength{\tabcolsep}{3pt}
\resizebox{\columnwidth}{!}{%
\begin{tabular}{c l c c c c c}
\toprule
Model & Scheme
& \begin{tabular}[c]{@{}c@{}}Prover Time\textsuperscript{\textdagger} (s)\\(Speedup)\end{tabular}
& \begin{tabular}[c]{@{}c@{}}Verifier \\ Time (s)\end{tabular}
& \begin{tabular}[c]{@{}c@{}}Proof\\ Size (KB)\end{tabular}
& \begin{tabular}[c]{@{}c@{}}Memory (GB)\\(Baseline/\X)\end{tabular} \\
\midrule
\multirow{5}{*}{AlexNet}
 & zkCNN
 & 24.631 {\scriptsize ($1.00\times$)}
 & 0.0010
 & 273.0
 & \textbf{4.14} {\scriptsize\boldmath ($1.00\times$)} \\
 & \X ($K{=}2$)
 & 18.226 {\scriptsize ($1.35\times$)}
 & 0.0011
 & 376.8
 & 8.99 {\scriptsize ($0.46\times$)} \\
 & \X ($K{=}3$)
 & 13.534 {\scriptsize ($1.82\times$)}
 & 0.0008
 & 575.4
 & 8.58 {\scriptsize ($0.48\times$)} \\
 & \X ($K{=}4$)
 & 12.885 {\scriptsize ($1.91\times$)}
 & \textbf{0.0008}
 & 726.8
 & 10.02 {\scriptsize ($0.41\times$)} \\
 & \X ($K{=}5$)
 & \textbf{9.837} {\scriptsize\boldmath ($2.50\times$)}
 & 0.0012
 & 734.1
 & 11.32 {\scriptsize ($0.37\times$)} \\
\midrule
\multirow{5}{*}{AlexNet-Wide}
 & zkCNN
 & 83.852 {\scriptsize ($1.00\times$)}
 & 0.0015
 & 470.0
 & \textbf{16.37} {\scriptsize\boldmath ($1.00\times$)} \\
 & \X ($K{=}2$)
 & 72.251 {\scriptsize ($1.16\times$)}
 & 0.0012
 & 670.3
 & 27.91 {\scriptsize ($0.59\times$)} \\
 & \X ($K{=}3$)
 & 53.718 {\scriptsize ($1.56\times$)}
 & 0.0007
 & 869.9
 & 33.75 {\scriptsize ($0.49\times$)} \\
 & \X ($K{=}4$)
 & 45.697 {\scriptsize ($1.83\times$)}
 & \textbf{0.0005}
 & 1069.3
 & 34.97 {\scriptsize ($0.47\times$)} \\
 & \X ($K{=}5$)
 & \textbf{37.416} {\scriptsize\boldmath ($2.24\times$)}
 & 0.0007
 & 1268.8
 & 39.22 {\scriptsize ($0.42\times$)} \\
\midrule
\multirow{5}{*}{VGG16}
 & zkCNN
 & 66.966 {\scriptsize ($1.00\times$)}
 & 0.0023
 & 345.3
 & \textbf{9.59} {\scriptsize\boldmath ($1.00\times$)} \\
 & \X ($K{=}2$)
 & 42.091 {\scriptsize ($1.59\times$)}
 & 0.0024
 & 736.7
 & 17.30 {\scriptsize ($0.55\times$)} \\
 & \X ($K{=}3$)
 & 30.081 {\scriptsize ($2.23\times$)}
 & 0.0018
 & 935.6
 & 19.52 {\scriptsize ($0.49\times$)} \\
 & \X ($K{=}4$)
 & 24.289 {\scriptsize ($2.76\times$)}
 & 0.0012
 & 943.0
 & 16.67 {\scriptsize ($0.58\times$)} \\
 & \X ($K{=}5$)
 & \textbf{20.604} {\scriptsize\boldmath ($3.25\times$)}
 & \textbf{0.0008}
 & 1045.9
 & 17.08 {\scriptsize ($0.56\times$)} \\
\bottomrule
\end{tabular}%
}
\caption{Performance of \X on CNN workloads. \textsuperscript{\textdagger}Response time $\approx$ Prover time in all cases.}
\label{tab:cnn_split}
\end{table}

Table~\ref{tab:cnn_split} presents the performance of \X for three CNNs in comparison to the baseline (zkCNN~\cite{zkcnn}). We evaluate AlexNet~\cite{alexnet}, AlexNet-Wide~\cite{alexnet} (with $4\times$ the channels), and VGG16~\cite{vgg16}. We highlight four observations.

First, \X reduces prover time for all three CNN workloads. For example, in AlexNet, prover time decreases from $24.63$s to $9.84$s with $K{=}5$, giving a $2.50\times$ speedup. Similarly, \X achieves a $2.24\times$ and $3.25\times$ speedup for AlexNet-Wide and VGG16, respectively. Thus, each sub-proof is executed in parallel, allowing the proofs for two layers to be computed simultaneously. This reduces the total prover time required to compute the proofs, despite the additional computation required to commit to the boundary polynomials in \X. This additional computation required by \X is $\sim1\%$ of the total prover time. 

Second, the response-time speedup closely matches the prover-time speedup despite an increase in proof size. For example, \X achieves $2.5\times$ speedup for both prover time and response time in AlexNet, when using $5$ partitions, while the proof size increases from $273$KB to $734.1$KB. This is because proof transmission time at 100 MBps is negligible in comparison to the prover time: even the largest proof in the table is only $1268.8$KB and adds about $12$ms to response time, while the prover times are several seconds. Thus, the end-to-end response time is dominated by proof generation.

Third, peak memory increases for \X in all three workloads. For example, for AlexNet peak memory grows from $4.14$GB in the baseline to $11.32$GB in \X with $5$ partitions. This is because, for \X, we instantiate the precomputed data for commitment openings for each sub-proof. Since the sizes of CNN workloads are small, the fixed memory cost of replicating the precomputed data dominates the memory utilization. However, this increase in memory is not significant, \ie the peak memory for all configurations in CNNs is less than $40$ GB.

Fourth, we observe a reduction in verifier times for \X when normalised to the respective baselines (up to $1.28\times$, $2.89\times$, and $2.72\times$ for AlexNet, AlexNet-Wide, and VGG16, respectively). However, the absolute differences are only on the order of milliseconds: verifier time is below $3$ms for all baseline and partitioned executions. This variation is negligible compared to prover time, which ranges from tens of seconds to over a minute for the larger CNNs. 

We conclude that the sub-proof parallelism exposed by \X reduces proof-generation latency for CNNs, with the speedup increasing as the number of partitions increases.

\subsection{Performance of \X on GPT-2}
\label{sec:eval_gpt}
\begin{table}[t]
\centering\small
\setlength{\tabcolsep}{4pt}
\resizebox{\columnwidth}{!}{%
\begin{tabular}{c c *{3}{c}}
\toprule
Scheme
& \begin{tabular}[c]{@{}c@{}}Proof (KB)\end{tabular} & \begin{tabular}[c]{@{}c@{}}Prover Time\textsuperscript{\textdagger} (s)\\(Speedup)\end{tabular} & \begin{tabular}[c]{@{}c@{}}Verifier Time (s)\\(Speedup)\end{tabular} & \begin{tabular}[c]{@{}c@{}}Memory (GB)\\(Baseline/\X)\end{tabular}\\
\midrule
zkGPT  & 109.2
& 147.7 (1.00$\times$) & 0.328 (1.00$\times$) & 240.5 (1.00$\times$) \\
\X ($K=2$)  & 141.7
&  96.4 (1.53$\times$) & 0.295 (1.11$\times$) & 257.8 (0.93$\times$) \\
\X ($K=3$)  & 149.4
&  73.1 (2.02$\times$) & 0.287 (1.15$\times$) & 208.3 (1.15$\times$) \\
\X ($K=4$)  & 173.9
&  63.0 (2.34$\times$) & 0.366 (0.90$\times$) & 266.4 (0.90$\times$) \\
\X ($K=6$)  & 200.6
&  45.5 (3.25$\times$) & 0.343 (0.96$\times$) & 208.4 (1.15$\times$) \\
\X ($K=12$) & 268.6
&  \textbf{30.6} (\textbf{4.83}$\times$) & 0.273 (1.20$\times$) & 211.1 (1.14$\times$) \\
\bottomrule
\end{tabular}%
}
\caption{Performance of \X on GPT-2~\cite{gpt} relative to zkGPT ($K=1$) for a sequence length of 64. 
\\ \textsuperscript{\textdagger}Response time $\approx$ Prover time in all cases.}
\label{tab:gpt_clubbed}
\end{table}

\noindent
Table~\ref{tab:gpt_clubbed} presents the performance of \X when partitioning along layer boundaries on
GPT-2~\cite{gpt} in comparison to the baseline (zkGPT~\cite{zkgpt}). Here, we use a sequence length
of 64. We highlight three key observations. 

First, \X achieves a prover-time speedup of up to $4.83\times$ with $12$ partitions, with speedup increasing as the number of partitions grows. For example, with $2$ partitions, \X achieves a $1.53\times$ speedup. This improvement is because independent sub-proofs are executed in parallel, allowing proofs for multiple transformer blocks to be computed simultaneously. As a result, \X reduces total prover time despite the additional cost of committing to boundary polynomials. This additional computation required by \X is $\sim$1--3\% of the total prover time. 

Second, verifier time and peak memory are roughly constant irrespective of the number of partitions. Verifier time remains $\sim0.3s$, although each sub-proof only proves a subset of the computations. However, each sub-proof also includes additional commitment openings for the boundary polynomials. Since the commitment openings dominate the verifier execution time, we do not see a reduction in verifier times. Peak memory ranges from
$0.90\times$ for $2$ partitions to $1.2\times$ for $12$ partitions, which is comparable to the baseline. This is because \X does not modify the total number of operations required to compute the proof significantly. Thus, the intermediate values required to store the witness values and compute the proofs for all partitions are similar in number to the baseline.

Third, the proof size increases with the number of partitions, from
$109.2$KB in the baseline to $268.6$KB for 12 partitions (a $2.5\times$
increase). This is due to the additional boundary commitments and their
openings. This increase, however, does not affect response time: transmitting the
largest proof takes under $3$ms on a 100\,MBps network, so the response
time speedup matches the prover-time speedup of $4.83\times$.
We conclude that the sub-proof parallelism exposed by \X reduces proof-generation latency on GPT-2 workloads, despite the increase in proof size.

\begin{table}[h]
\centering\small
\setlength{\tabcolsep}{6pt}
\begin{tabular}{c c c c}
\toprule
\begin{tabular}[c]{@{}c@{}}Total\\threads ($T$)\end{tabular}
& \begin{tabular}[c]{@{}c@{}}Baseline ($K{=}1$)\\Prover (s)\end{tabular}
& \begin{tabular}[c]{@{}c@{}}\X \\ ($K{=}12$)\\Prover (s)\end{tabular}
& Speedup \\
\midrule
16  & 160.5 & 55.9 & 2.87$\times$          \\
32  & 146.9 & 41.7 & 3.52$\times$          \\
64  & 141.1 & 33.4 & 4.22$\times$          \\
128 & 142.9 & 30.6 & 4.66$\times$          \\
192 & 147.7 & 30.6 & \textbf{4.83}$\times$ \\
\bottomrule
\end{tabular}
\caption{Performance of \X on GPT-2 with 12 partitions for various thread counts.}
\label{tab:gpt_threads}
\end{table}
\noindent
\subsubsection{Scalability of \X with increasing number of cores}
Table~\ref{tab:gpt_threads} presents the performance of \X when partitioning along the model layers to obtain 12 partitions on
GPT-2~\cite{gpt} for a sequence length
of 64 for different threads $T$. Here, the baseline computes its single
proof using all $T$ threads, while the $12$ sub-proofs of \X are executed in
parallel using a total of $T$ threads, \ie each sub-proof uses
$\lfloor T/12\rfloor$ threads. We observe that both the baseline and \X benefit from increasing the total number of threads. The prover time for the baseline reduces from $160.5$s to $147.7$s, while for \X it 
reduces from $55.9$s to $30.6$s, when using 16 and 192 threads, respectively. However, the speedup due to \X also increases from $2.87\times$ when using 16 threads to $4.83\times$ when using 192 threads. This is because the baseline exploits parallelism only within the cryptographic kernels for a single proof. 
In contrast, \X exposes an additional dimension of parallelism, across neural network layers in different partitions. 
Consequently, as the total number of threads increases, more threads can be assigned to each sub-proof, thereby reducing the prover time of each partition. Thus \X yields progressively larger performance gains relative to the baseline with an increase in number of threads.

\begin{table}[t]
\centering\small
\setlength{\tabcolsep}{5pt}
\resizebox{\columnwidth}{!}{%
\begin{tabular}{c c c c c}
\toprule
\begin{tabular}[c]{@{}c@{}}Seq.\\len.\end{tabular} &
\begin{tabular}[c]{@{}c@{}}Partitioning\\$K_\ell \times K_s$ ($K$)\end{tabular} &
\begin{tabular}[c]{@{}c@{}}Proof\\(KB)\end{tabular} &
\begin{tabular}[c]{@{}c@{}}Prover Time (s)\\(Speedup)\end{tabular} &
\begin{tabular}[c]{@{}c@{}}Verifier Time (s)\\(Speedup)\end{tabular} \\
\midrule
\multirow{2}{*}{64}
  & $12 \times 1$ (12) & 268.6  & \textbf{30.6} (\textbf{4.83}$\times$) & 0.273 (1.20$\times$) \\
  & $12 \times 2$ (24) & 956.9  & 43.3 (3.41$\times$)          & 1.268 (0.26$\times$) \\
\cmidrule(l){1-5}
\multirow{3}{*}{128}
  & $12 \times 1$ (12) & 325.6  & 57.0 (3.94$\times$)          & 0.287 (1.19$\times$) \\
  & $12 \times 2$ (24) & 1009.7 & \textbf{55.3} (\textbf{4.06}$\times$) & 1.295 (0.26$\times$) \\
  & $12 \times 4$ (48) & 1937.2 & 63.6 (3.53$\times$)          & 1.423 (0.24$\times$) \\
\cmidrule(l){1-5}
\multirow{3}{*}{256}
  & $12 \times 1$ (12) & 335.8  & 100.4 (6.03$\times$)         & 0.350 (1.13$\times$) \\
  & $12 \times 4$ (48) & 2059.1 & \textbf{88.5} (\textbf{6.84}$\times$) & 2.888 (0.14$\times$) \\
  & $12 \times 8$ (96) & 3949.6 & 110.9 (5.45$\times$)         & 2.432 (0.16$\times$) \\
\bottomrule
\end{tabular}%
}
\caption{Performance of \X on GPT-2 when partitioning along both model-layer and sequence dimensions. $K_\ell$ denotes the number of layer partitions, and $K_s$ denotes the number of sequence partitions.}
\label{tab:gpt_sequence_partition}
\end{table}

\subsubsection{Performance of \X with sequence partitioning}
Table~\ref{tab:gpt_sequence_partition} evaluates \X when we partition GPT-2 along both the model-layer dimension and the sequence dimension. 

We highlight three key observations. First, sequence partitioning provides an additional source of parallelism, thereby improving performance. For example, for sequence length $128$, increasing from $12$ layer-only partitions to $24$ layer-and-sequence partitions reduces prover time from $57.0$s to $55.3$s, improving speedup from $3.94\times$ to $4.06\times$. For sequence length $256$, using $48$ total partitions provides a speedup of $6.84\times$. This is because sequence partitioning exposes parallelism within each transformer block in addition to the parallelism across blocks. As the sequence length grows, each block contains more work, so splitting the sequence lets \X better utilize the available hardware.

Second, sequence partitioning has diminishing returns when the partitions become too fine-grained. For example, with sequence length $64$, increasing from $12$ to $24$ total partitions increases prover time from $30.6$s to $43.3$s. This is because each additional sequence partition introduces more boundary commitments, while also reducing the number of threads available to each sub-proof. Thus, sequence partitioning is most beneficial when the additional parallelism outweighs the cost of computing more sub-proofs and the additional boundary commitments.

Third, proof size and verifier time increase with two-dimensional partitioning. For example, at sequence length $256$, using $12$ layer-only partitions vs $48$ layer-and-sequence partitions increases proof size from $335.8$KB to $2059.1$KB, and verifier time from $0.350$s to $2.888$s. This is because the verifier must check more sub-proofs and more boundary openings across both the layer and sequence dimensions. However, the verifier time remains small relative to prover time. Similarly, the number of additional commitments and openings required for \X increases with increase in number of partitions, increasing the size of the proofs. However, the largest proof in Table~\ref{tab:gpt_sequence_partition} is under $4$MB, which takes under $40$ms to transmit on a 100\,MBps network. Thus, the increase in proof size has a negligible impact on the prover time speedups due to \X.

We conclude that in \X, sequence partitioning complements layer partitioning by increasing the number of available independent sub-proofs.

\begin{table}[t]
  \centering
  \begin{tabular}{cccc}
    \toprule
    \begin{tabular}[c]{@{}c@{}}Seq.\\ len.\end{tabular} &
    \begin{tabular}[c]{@{}c@{}}Baseline peak \\ memory (GB)\end{tabular} &
    \begin{tabular}[c]{@{}c@{}}\X Peak memory \\ (GB) (Baseline/\X)\end{tabular} &
    \begin{tabular}[c]{@{}c@{}}Prover time (s)\\ (Slowdown)\end{tabular} \\
    \midrule
    64  & 240.5 & 31.9 ($7.5\times$) & 288.5 ($1.95\times$) \\
    128 & 269.1 & 52.5 ($5.1\times$) & 479.9 ($2.14\times$) \\
    256 & 532.4 & 66.0 ($8.1\times$) & 846.6 ($1.40\times$) \\
    \bottomrule
  \end{tabular}
    \caption{Peak prover memory and prover time slowdown for zkGPT~\cite{zkgpt} versus sequential \X with
  $12$ partitions.}
  \label{tab:seqmem}
\end{table}

\subsubsection{Sequential Execution and Memory Footprint}
\label{app:seqmem}

\X supports a \emph{sequential} execution mode, where sub-proofs are
generated one at a time rather than concurrently. Table~\ref{tab:seqmem}
presents the performance of sequential \X with $12$ layer partitions on
GPT-2~\cite{gpt} relative to the baseline (zkGPT~\cite{zkgpt}). We make
two observations.

First, sequential \X substantially reduces peak prover memory across all
sequence lengths. For example, at sequence length $64$, peak memory
decreases from $240.5$ GB to $31.9$ GB with $K{=}12$, a $7.5\times$
reduction. \X achieves similar reductions of $5.1\times$ and $8.1\times$
for sequence lengths $128$ and $256$, respectively. This is because each partition covers only a subset of the computations, and only one
partition's prover, witness, and intermediate values are stored in main
memory at a time.

Second, this memory reduction comes at the cost of increased prover
time. For example, sequential execution is $1.95\times$ slower than the
baseline at sequence length $64$. This overhead has two causes.
(i)~\X adds commitments and openings for the boundary polynomial.
(ii) The monolithic baseline merges the arithmetic circuits of all
partitions into a single circuit. This merged circuit allows
circuit-level optimizations across partitions, yielding fewer
constraints at the cost of higher memory.

Despite this time overhead, the memory reduction enables proof
generation on memory-constrained accelerators such as GPUs. Each
sub-proof fits within device memory ($32\text{--}66$\,GB), while the
baseline ($240\text{--}532$\,GB) does not. A host with large main
memory can support the baseline by paging from device memory to main
memory, but such paging is costly. By fitting each sub-proof within
device memory, \X can exploit the GPU's high memory bandwidth. This would substantially reduce per-sub-proof prover time and improve performance.

\section{Limitations}
\X's parallelism scales with the number of partitions $K$. In our implementation, $K$ is determined by the model architecture and the input size. Along the layer dimension, $K$ is set by the model's natural boundary points (\eg 5 for AlexNet). Models such as LLMs expose an additional dimension along the input sequence. \X therefore provides the greatest benefit for models and inputs that admit many partitions. We leave the design of additional partitioning schemes that further increase $K$ to future work.

%% file: src/prior.tex
\section{Related Work}
Here, we discuss prior work on improving prover performance and on modular proof frameworks.

\noindent
\textbf{Efficient zkML implementations and Arithmetization.}
Many prior works~\cite{298276, zkcnn,zkml,zkllm,zkgpt,cryptoeprint:2024/162} improve zkML performance by reducing constraint counts and optimizing arithmetization for common ML operators. zkCNN~\cite{zkcnn}, zkGPT~\cite{zkgpt}, and others~\cite{zkllm,qu_verfcnn_2025} introduce specialized arithmetizations for ML operators that reduce proof-generation cost, \eg convolution operations in zkCNN. Compiler and system-oriented frameworks~\cite{zktorch,zkpytorch,zkml,feng2024zeno} such as ZEN~\cite{feng2021zen} exploit high-level tensor semantics to generate specialized circuits for linear-algebra kernels. Artemis~\cite{apollo} proposes a CP-SNARK construction that lets the prover additionally prove the model weights match an external commitment. These works improve performance for a single proof. Our construction is complementary: it can be combined with them to improve prover performance via parallel sub-proof generation. We demonstrate this by implementing \X on top of zkGPT~\cite{zkgpt} and zkCNN~\cite{zkcnn} without modifying their arithmetization or PCS. Each partition's proof thus retains the original optimizations proposed by these works.

\noindent
\textbf{Modularity and proof composition.}
A complementary line of work constructs proofs from smaller proof components. LegoSNARK~\cite{campanelli2019legosnark} formalizes a modular approach for designing and composing SNARKs, assembling proofs of larger relations from proofs of simpler sub-relations. Commit-and-prove constructions such as Lunar~\cite{campanelli2021lunar} provide cryptographic mechanisms to enforce relationships (\eg equality) between a proof's witnesses and an externally committed witness.
In principle, these constructions can combine arguments from independent proof systems. However, Lunar and LegoSNARK cannot be used directly with zkML frameworks (see \S\ref{sec:key_idea}). Further, Lunar requires \prover to compute a linking proof for equality between witness values in two commitments. In contrast, \X's shared commitments require no linking proof, improving performance.

Prior work~\cite{ml_modular} introduces a modular framework based on verifiable evaluation (VE) for ML applications. It flexibly combines proof gadgets (protocols) to efficiently prove complex ML computations. Our construction is orthogonal: in the VE framework, proof generation remains sequential, from the outer layer to the inner layer. Integrating it with \X improves prover time per sub-proof, yielding better overall performance.

\noindent
\textbf{Accelerating prover performance using hardware.}
Prior work also targets performance by accelerating the dominant cryptographic kernels used by SNARK provers. GPU-based frameworks~\cite{ma2023gzkp,li2025zkpog,10.1145/3774934.3786448} offload expensive operations (\eg MSM/NTT) to GPUs, and libraries such as Icicle~\cite{kthiri2024icicle} provide optimized GPU kernels for integration into SNARK stacks. Several works~\cite{samardzic2024nocap,daftardar2024szkp,wang2025unizk,daftardar2025zkspeed} also propose dedicated hardware accelerators for proof generation. These approaches are complementary to \X. \X unlocks an additional dimension of parallelism, \ie enabling parallel proof generation across partitions. It can be combined with kernel-level GPU optimizations or specialized accelerators to further improve zkML performance.

\noindent
\textbf{Accelerating GKR and sumcheck protocols.}
Prior works such as Libra~\cite{libra}, divide-and-conquer sumcheck~\cite{sumcheck}, SVO~\cite{cryptoeprint:2025/1117}, and Virgo~\cite{virgo} improve GKR and sumcheck protocols by reducing the asymptotic complexity of the proving algorithm. These approaches are orthogonal to ours: they can be applied independently to each sub-proof, yielding better overall performance when combined with \X.


Hydra~\cite{hydra} parallelizes GKR-style proofs by decomposing them across GKR layers, with sub-proofs linked via polynomial commitments. It does not guarantee zero knowledge for the resulting SNARKs, and therefore cannot be used directly for zkML. Modern zkML frameworks such as zkGPT~\cite{zkgpt} use circuit squashing, which reduces circuit depth. Squashing, however, produces layers of non-uniform width. Partitioning across these circuit layers yields sub-proofs with unequal prover times and limited parallel speedup. \X instead partitions at the model level and applies circuit-level optimizations within each partition, yielding balanced sub-proofs.

Some prior works~\cite{full_account,vu_hybrid_2013,cormode_practical_2012,10.1145/3658644.3690318, blendysc} explore techniques for reducing the prover time or memory of the sumcheck protocol in the GKR framework for data-parallel circuits, \ie multiple identical computations on independent data. These optimizations can improve existing zkML systems modeled as data-parallel circuits. \X is orthogonal to these prior approaches. The same techniques can be applied in \X for each sub-proof.

%% file: src/conclusion.tex
\section{Conclusion}
We introduce \X, the first modular proof framework for zkML
applications. \X partitions the proof computation of an ML model into
multiple independent sub-proofs, each proving the correctness of a
subset of the inference operations. This design enables parallel computation of sub-proofs, reducing prover time. Each sub-proof also requires less memory, thus, enabling proof
computation on memory-constrained accelerators. \X requires no changes
to existing polynomial commitment schemes and delivers significant
speedups over state-of-the-art zkML implementations.

%% file: src/appendix.tex
\section{Background Definitions}
\subsection{Zero-Knowledge Sumcheck} 
\label{sec:zk_sumcheck}
To prevent \verifier from recovering the layer polynomial evaluations during sumcheck, \prover modifies the original polynomial by adding a random masking polynomial $k$~\cite{zksumcheck,libra}. \prover samples a random univariate polynomial $k \colon \mathbb{F}^\ell \to \mathbb{F}$ of degree $d = \deg(h)$. The sumcheck protocol is then run on the modified equation~\cite{libra}
\begin{align}
\label{eq:zk-sumcheck}
    H + \rho K = \sum_{x \in \{0,1\}^m} \bigl( h(x) + \rho \, k(x) \bigr),
\end{align}
where $H = \sum_{x \in \{0,1\}^m} h(x)$ is the original claim, $K = \sum_{x \in \{0,1\}^m} k(x)$ is a single scalar sent by \prover, and $\rho \in \mathbb{F}$ is a random challenge chosen by \verifier. Since each evaluation sent during sumcheck is masked by a fresh evaluation of $k$, \verifier learns \emph{nothing} about the individual values of $h$, \ie $\tilde{V}_{i}$ in GKR.

\subsection{ZK-SNARK properties} 
\label{sec:zk-snark}
\begin{definition}[zk-SNARK properties]
\label{def:zksnark}
A zk-SNARK for relation $\mathcal{R}$ satisfies:
\begin{itemize}
    \item \textbf{Completeness:} For all $(\mathsf{i}, \mathsf{x}, \mathsf{w}) \in \mathcal{R}$:
    \begin{align*}
    \Pr\Big[\mathsf{Verify}(\mathsf{pp}, \mathsf{i}, \mathsf{x}, \pi) = 1\Big] = 1
    \end{align*}
    where $\mathsf{pp} \leftarrow \mathsf{Setup}(1^\lambda, \mathsf{i})$ and $\pi \leftarrow \mathsf{Prove}(\mathsf{pp}, \mathsf{i}, \mathsf{x}, \mathsf{w})$.
    
    \item \textbf{Knowledge Soundness:} For every PPT adversary $\mathcal{A}$, there exists a PPT extractor $\mathcal{E}$ such that for all $(\mathsf{i}, \mathsf{x})$:
    \begin{align*}
    \Pr\Big[\mathsf{Verify}(\mathsf{pp}, \mathsf{i}, \mathsf{x}, \pi) = 1 \;\land\; (\mathsf{i}, \mathsf{x}, \mathsf{w}) \notin \mathcal{R}\Big] \leq \mathsf{negl}(\lambda)
    \end{align*}
    where $\mathsf{pp} \leftarrow \mathsf{Setup}(1^\lambda, \mathsf{i})$, $(\mathsf{x}, \pi) \leftarrow \mathcal{A}(\mathsf{pp}, \mathsf{i})$, and $\mathsf{w} \leftarrow \mathcal{E}^{\mathcal{A}}(\mathsf{pp}, \mathsf{i})$.
    
    \item \textbf{Zero-Knowledge:} There exists a PPT simulator $\mathsf{Sim}$ such that for all $(\mathsf{i}, \mathsf{x}, \mathsf{w}) \in \mathcal{R}$ and all PPT distinguishers $\mathcal{D}$:
    \begin{align*}
    \big| \Pr[\mathcal{D}(\pi) = 1] - \Pr[\mathcal{D}(\pi') = 1] \big| \leq \mathsf{negl}(\lambda)
    \end{align*}
    where $\pi \leftarrow \mathsf{Prove}(\mathsf{pp}, \mathsf{i}, \mathsf{x}, \mathsf{w})$ and $\pi' \leftarrow \mathsf{Sim}(\mathsf{pp}, \mathsf{i}, \mathsf{x})$.
\end{itemize}
\end{definition}

\subsection{Polynomial Commitment Scheme}
\label{sec:pcs_formal}
Let \(\mathcal{P}_D\) denote the class of polynomials over \(\mathbb{F}_p\) with degree bounds \(D\). A polynomial commitment scheme (PCS) for \(\mathcal{P}_{D}\) consists of the following algorithms.
\begin{itemize}
    \item \(\mathsf{Setup}(1^\lambda,D)\rightarrow \mathsf{pp}\):
On input security parameter \(\lambda\) and class description \(D\), the setup algorithm outputs public parameters \(\mathsf{pp}\).

    \item \(\mathsf{Commit}(\mathsf{pp}, f; r) \rightarrow \mathsf{com}\):
    On input \(\mathsf{pp}\), polynomial \(f \in \mathcal{P}_{s,D}\), and commitment randomness \(r\), the commitment algorithm outputs commitment \(\mathsf{com}\).

    \item \(\mathsf{Open}(\mathsf{pp},f,\boldsymbol{\alpha};r)\rightarrow(v,\pi)\):
On input \(f\in\mathcal{P}_D\), evaluation point \(\boldsymbol{\alpha}\in\mathbb{F}_p^{s_f}\), and commitment randomness \(r\), the opening algorithm outputs \(v=f(\boldsymbol{\alpha})\) and opening proof \(\pi\).

    \item \(\mathsf{Verify}(\mathsf{pp},\mathsf{com},\boldsymbol{\alpha},v,\pi)
    \rightarrow \{0,1\}\):
    On input \(\mathsf{pp}\), commitment \(\mathsf{com}\), evaluation point \(\boldsymbol{\alpha}\in\mathbb{F}_p^{s_f}\), claimed value \(v\), and opening proof \(\pi\), the verifier outputs \(1\) if the opening is accepted and \(0\) otherwise.
\end{itemize}

\begin{definition}[PCS properties]
\label{def:pcs}
A polynomial commitment scheme $(\mathsf{Setup},\mathsf{Commit},\mathsf{Open},\mathsf{Verify})$
for the polynomial class \(\mathcal{P}_D\) satisfies the following
properties.

\smallskip
\noindent\textbf{Completeness.}
For every polynomial \(f\in\mathcal{P}_D\), every evaluation point
\(\alpha\) in \(f\)'s domain, and every valid commitment randomness \(r\),
we have
\[
\Pr\left[
\mathsf{Verify}(\mathsf{pp},\mathsf{com},\alpha,v,\pi)=1
\right]=1,
\]
where
\[
\mathsf{pp}\leftarrow \mathsf{Setup}(1^\lambda,D),\qquad
\mathsf{com}\leftarrow \mathsf{Commit}(\mathsf{pp},f;r),
\]
and
\[
(v,\pi)\leftarrow \mathsf{Open}(\mathsf{pp},f,\alpha;r).
\]

\smallskip
\noindent\textbf{Binding.}
For every \(\PPT\) adversary \(\mathcal{A}\),
\[
\Pr\left[
\begin{array}{l}
\mathsf{Verify}(\mathsf{pp},\mathsf{com},\alpha,v,\pi)=1
\;\land \\[1mm]
\mathsf{Verify}(\mathsf{pp},\mathsf{com},\alpha,v',\pi')=1
\;\land \\[1mm]
v\neq v'
\end{array}
\right]\leq \negl(\lambda),
\]
where
$\mathsf{pp}\leftarrow\mathsf{Setup}(1^\lambda,D),$
and
$
(\mathsf{com},\alpha,v,\pi,v',\pi')
\leftarrow \mathcal{A}(\mathsf{pp}).
$

\smallskip
\noindent\textbf{Hiding.}
For security parameter \(\lambda\), public parameters
\(\mathsf{pp}\leftarrow\mathsf{Setup}(1^\lambda,D)\), polynomial
\(f\in\mathcal{P}_D\), \(\PPT\) adversary \(\mathcal{A}\), and simulator
\(\mathsf{S}=(\mathsf{S}_1,\mathsf{S}_2)\), consider the following two
experiments.

\[
\begin{array}[t]{l}
\underline{\mathsf{Real}_{\mathcal{A},f}(\mathsf{pp}):}\\[1mm]
r_f \leftarrow \mathcal{R};\\
\mathsf{com}\leftarrow\mathsf{Commit}(\mathsf{pp},f;r_f);\\
\mathsf{st}\leftarrow\mathcal{A}(\mathsf{com});\\
(\alpha,y,\pi)\leftarrow \\
\quad\langle \mathsf{Open}(\mathsf{pp},f,\cdot\,;r_f),\mathcal{A}\rangle(\mathsf{st});\\
b\leftarrow\mathcal{A}(\mathsf{com},\alpha,y,\pi);\\
\text{output } b.
\end{array}
\qquad
\begin{array}[t]{l}
\underline{\mathsf{Ideal}_{\mathcal{A},f}^{\mathsf{S}}(\mathsf{pp}):}\\[1mm]
(\mathsf{com},\mathsf{td})\leftarrow\mathsf{S}_1(\mathsf{pp});\\
\mathsf{st}\leftarrow\mathcal{A}(\mathsf{com});\\
\alpha\leftarrow\mathcal{A}(\mathsf{st});\\
y= f(\alpha);\\
\pi\leftarrow\mathsf{S}_2(\mathsf{td},\alpha,y);\\
b\leftarrow\mathcal{A}(\mathsf{com},\alpha,y,\pi);\\
\text{output } b.
\end{array}
\]

The PCS is hiding if, for every \(\PPT\) adversary \(\mathcal{A}\) and every
\(f\in\mathcal{P}_D\), there exists a \(\PPT\) simulator
$\mathsf{S}$ such that
\[
\left|
\Pr\left[\mathsf{Real}_{\mathcal{A},f}(\mathsf{pp})=1\right]
-
\Pr\left[\mathsf{Ideal}_{\mathcal{A},f}^{\mathsf{S}}(\mathsf{pp})=1\right]
\right|
\leq \negl(\lambda).
\]
\end{definition}

\section{Discussion}
\subsection{Multiple Parts Consuming The Same Input}
\label{sec:mult_part}
In models with residual or skip connections (\eg ResNet~\cite{resnet}), a layer's output may be consumed by multiple non-contiguous downstream layers. If such a layer is chosen as a partition boundary, \X can reuse a single commitment for the shared activations across multiple sub-proofs. However, the zero-knowledge construction requires care: when a committed polynomial is opened in more than two sub-proofs, the masking polynomial ($R$) used in its low-degree extension must include sufficiently many monomials to ensure that all revealed evaluations remain statistically independent. The required number of variables and monomials in $R$ therefore grows with the number of consumer proofs. For example, if the commitment is shared across 4 to 9 proofs, $R$ can be defined as:
\begin{align}
\label{eq:masking-3}
    R(z_2,z_1,w) = & a_0 + a_1z_2+a_2z_2^2+a_3z_1+a_4z_1^2+a_5w^2+ \nonumber\\ &a_6w+a_7z_2z_1+a_8z_2w+a_9z_2^2z_1^2w^2
\end{align}
where $a_0,\dots,a_9$ are randomly sampled from $\mathbb{F}_p$.

\subsection{Applying \X To Other Computations}
\label{sec:other_computation}
The \X construction only modifies the GKR protocol at the boundary layers, leaving the core protocol unchanged. Hence, \X is not specific to machine learning workloads. \X can be applied to any computation that can be represented as a layered arithmetic circuit verified by sumcheck-based protocols. 

\subsection{Applying \X To PLONK-based Systems}
\label{sec:plonk}
Although we present \X for GKR-based zkML systems, the underlying idea of reusing committed intermediate values across independently verified sub-proofs also applies to table-based arithmetizations like PLONK~\cite{Gabizon2019PLONKPO}. In such settings, \X can be realized by assigning shared boundary values to fixed advice columns and reusing them consistently across circuits, similar to techniques in Artemis~\cite{apollo}. The main requirement is that the commitment scheme enforces binding across all sub-proofs that reference the shared values.

\subsection{\X: Analysis}
\label{sec:proof}

\noindent
\begin{proof}[Proof of Theorem~\ref{thm:knowledge-argument-twopart}]
We prove completeness and knowledge soundness separately.

\smallskip
\noindent
\textbf{Completeness.}
Let \((\mathsf{i},\mathsf{x},\mathsf{w})\in\mathcal{R}_{\mathsf{zkml}}\),
where \(\mathsf{x}=\{q,r\}\). Consider an honest execution of \(\X\) with
witness \(\mathsf{w}\). By completeness of the PCS, every honestly generated opening of
\(\mathsf{com}_b\), \(\mathsf{com}_{R_b}\), and the other commitments used in
Protocol~\ref{prot:zkmod} verifies. By completeness of the underlying
GKR-based CP-SNARKs, the sub-proofs for \(\mathcal{R}_A\) and
\(\mathcal{R}_B\) both accept. Since the \(\X\) verifier accepts only if both
sub-proofs accept and all PCS openings verify, the honestly generated proof
\(\pi\leftarrow\Prove(\mathsf{pp},\mathsf{i},\mathsf{x},\mathsf{w})\) is
accepted. Therefore, \(\X\) satisfies completeness for
\(\mathcal{R}_{\mathsf{zkml}}\).

\smallskip
\noindent
\textbf{Knowledge Soundness.}
Let \(\mathcal{A}\) be any \(\PPT\) adversary that, on input public parameters
\(\mathsf{pp}\), index \(\mathsf{i}\), and instance
\(\mathsf{x}=\{q,r\}\), outputs an accepting \(\X\) proof
\[
\pi=(\mathsf{com}_b,\pi_A,\pi_B,\ldots)
\]
with non-negligible probability. We construct a \(\PPT\) extractor that outputs
a valid witness for \(\mathcal{R}_{\mathsf{zkml}}\).

Since the \(\X\) verifier accepts, both underlying sub-verifiers accept. By
knowledge soundness of the CP-SNARK for \(\mathcal{R}_A\), an extractor returns
a witness
\[
\mathsf{w}_A=
(\theta_A,\mathsf{adv}_A,\mathsf{w}_b)
\]
such that $(\mathsf{i}_A,\mathsf{x}_A=\{r\},\mathsf{w}_A)\in\mathcal{R}_A .$
Similarly, the CP-SNARK for \(\mathcal{R}_B\) yields an extractor returning
a witness
\[
\mathsf{w}_B=
(\theta_B,\mathsf{adv}_B,\mathsf{act}_b)
\]
such that $(\mathsf{i}_B,\mathsf{x}_B=\{q\},\mathsf{w}_B)\in\mathcal{R}_B .$

It remains to show that these extracted boundary values are consistent. In
Protocol~\ref{prot:zkmod}, both sub-proofs use the same PCS commitment $\mathsf{com}_b$ to the masked boundary polynomial \(\dot V_b\). The
\(\mathcal{R}_A\) sub-proof opens it when checking the boundary input
polynomial for part~\(A\). The \(\mathcal{R}_B\) sub-proof opens it as the
committed output polynomial for layer \(b\). Since the \(\X\) verifier accepts
only if all such openings verify and the PCS is binding, all accepting openings
of \(\mathsf{com}_b\) are consistent with a single degree-bounded polynomial,
except with negligible probability. By construction (Eq.~\ref{eq:lde}),
\(\dot V_b\) and \(\tilde V_b\) agree on the Boolean hypercube. Therefore, except with negligible probability, the
boundary values \(\mathsf{w}_b\) extracted from the \(\mathcal{R}_A\) proof
equal the layer-\(b\) activations \(\mathsf{act}_b\) extracted from the
\(\mathcal{R}_B\) proof. We define the combined witness
\[
\mathsf{w}=
(\theta_A,\theta_B,\mathsf{adv}_A,\mathsf{adv}_B).
\]
The witness \(\mathsf{w}_A\) enforces correctness of layers
\(\{0,\ldots,b-1\}\) and \(\mathsf{w}_B\) of layers \(\{b,\ldots,d\}\). Since
the extracted boundary values match, these two witnesses combine into a valid
witness for the original end-to-end relation:
$
(\mathsf{i},\mathsf{x},\mathsf{w})\in\mathcal{R}_{\mathsf{zkml}},
$
except with negligible probability.
Therefore, \(\X\) satisfies knowledge soundness for \(\mathcal{R}_{\mathsf{zkml}}\).

Since \(\X\) satisfies both completeness and knowledge soundness, it is a
knowledge argument for \(\mathcal{R}_{\mathsf{zkml}}\).
\end{proof}

\begin{proof}[Proof of Theorem~\ref{thm:zk}]
We construct a simulator \(\mathcal{S}_{\mathsf{mod}}\) for the two-partition
protocol \(\X\). It combines \(\mathcal{S}_{\mathsf{sc}}\) for the underlying
GKR-based CP-SNARK transcripts with \(\mathcal{S}_{\mathsf{pc}}\) for the
hiding PCS.

All transcript components that do not involve the shared boundary polynomial \(\dot V_b\) are simulated as in the standard commit-and-prove GKR analysis. By the zero-knowledge of the underlying GKR-based CP-SNARKs and the hiding of the PCS, these components are indistinguishable from those in a real execution. It remains to consider the messages involving \(\dot V_b\). The verifier learns two evaluations of \(\dot V_b\): one at the input-layer query point \(g^{(\mathsf{inp},A)}\) in sub-protocol \(A\), and one at the layer-\(b\) query point \(g^{(b)}\) in sub-protocol \(B\). In addition, the zero-knowledge GKR oracle check for layer \(b\) reveals one evaluation of the boundary masking polynomial, \(R_b(g^{(b)}_1,c)\), for a verifier-chosen random \(c\in\mathbb{F}_p\).
For simplicity, let \[ u = g^{(b)} \qquad\text{and}\qquad v = g^{(\mathsf{inp},A)}. \] Using the LDE form in Eq.~\ref{eq:lde}, the revealed boundary values can be written as \[ \dot V_b(u) = \widetilde V_b(u) + Z_b(u)M_u, \qquad \dot V_b(v) = \widetilde V_b(v) + Z_b(v)M_v, \] where \[ M_u= \sum_{w\in\{0,1\}}R_b(u_1,w), \qquad M_v= \sum_{w\in\{0,1\}}R_b(v_1,w), \] and the third revealed value is \[ \rho = R_b(u_1,c). \] We analyze the distribution of the mask evaluations \((M_u,M_v,\rho)\). By the definition of \(R_b\) in Eq.~\ref{eq:masking-2}, these values are linear in the random coefficients \(a_0,\ldots,a_6\): \[ \begin{pmatrix} M_u\\ M_v\\ \rho \end{pmatrix} = \begin{pmatrix} 2 & 2u_1 & 1 & u_1 & 2u_1^2 & 1 & u_1^2 \\ 2 & 2v_1 & 1 & v_1 & 2v_1^2 & 1 & v_1^2 \\ 1 & u_1 & c & cu_1 & u_1^2 & c^2 & c^2u_1^2 \end{pmatrix} \begin{pmatrix} a_0\\ a_1\\ a_2\\ a_3\\ a_4\\ a_5\\ a_6 \end{pmatrix}. \] Row reduction shows that the matrix has rank \(3\) whenever \[ u_1\neq v_1,\qquad 2c-1\neq 0,\qquad 2c^2-1\neq 0 \pmod p. \] These conditions are required for zero-knowledge of the underlying SNARKs~\cite{libra}. If any condition fails, the challenges are resampled~\cite{libra}. Thus, the linear map from \((a_0,\ldots,a_6)\) to \((M_u,M_v,\rho)\) is surjective. Since \(a_0,\ldots,a_6\) are sampled uniformly and independently, $(M_u,M_v,\rho)$ is uniform over \(\mathbb{F}_p^3\). Therefore, the revealed boundary-related values are statistically independent of \(\widetilde V_b\).

The simulator \(\mathcal{S}_{\mathsf{mod}}\) samples $Y_u,Y_v,\rho' \xleftarrow{\$}\mathbb{F}_p $ in place of $\dot V_b(u), \dot V_b(v), R_b(u_1,c)$, respectively. It uses \(\mathcal{S}_{\mathsf{pc}}\) to simulate the commitments and openings for \(\mathsf{com}_b\) and \(\mathsf{com}_{R_b}\), and \(\mathcal{S}_{\mathsf{sc}}\) to simulate the GKR/sumcheck transcripts consistently with \(Y_u,Y_v,\rho'\). We compare the real and simulated transcripts by a standard hybrid argument: first replace the GKR/CP-SNARK sub-proofs with \(\mathcal{S}_{\mathsf{sc}}\) outputs, then the PCS commitments and openings with \(\mathcal{S}_{\mathsf{pc}}\) outputs, and finally the boundary-related values with uniform field elements. The first two replacements are computationally indistinguishable by the underlying SNARK's zero-knowledge and the PCS's hiding properties. The last is statistically identical, as shown above. Thus, the real and simulated transcripts are indistinguishable. For every \(\PPT\) distinguisher \(\mathcal{D}\), \[ \left| \Pr[\mathcal{D}(\pi)=1] - \Pr[\mathcal{D}(\pi')=1] \right| \leq \negl(\lambda). \] Therefore, \(\X\) is zero-knowledge for \(\mathcal{R}_{\mathsf{zkml}}\).
\end{proof}

\textbf{Asymptotic Analysis.}
For a log-space uniform circuit $C$ of depth $d$ with input size $n$, the prover complexity is $\mathcal{O}(|C|)$, the verifier complexity is $\mathcal{O}(|x| + d \cdot \log |C|)$, and the proof size is $\mathcal{O}(d \cdot \log |C|)$~\cite{libra}.
Under \X, each sub-proof's complexities---with input sizes $n_1, n_2$ and circuits $C_1, C_2$ of depths $d_1, d_2$---match the baseline GKR protocol. The underlying zk-SNARK protocols for the sub-proofs remain unmodified. Thus, for $k$ partitions, the $i$-th proof has prover complexity $\mathcal{O}(|C_i|)$, verifier complexity $\mathcal{O}(|x_i| + d_i \cdot \log |C_i|)$, and proof size $\mathcal{O}(d_i \cdot \log |C_i|)$. Total complexities follow by summing across partitions. For equal partitions ($|C_i|=|C|/k$, $|x_i|=|x|/k$), the total prover complexity under \X matches the monolithic proof. However, the additional polynomial commitments increase total proof size in practice.

%% file: ref.bib
@article{zhang2019deep,
  author    = {Zhang, Shuai and Yao, Lina and Sun, Aixin and Tay, Yi},
  title     = {Deep Learning Based Recommender System: A Survey and New Perspectives},
  journal   = {ACM Computing Surveys},
  volume    = {52},
  number    = {1},
  pages     = {1--38},
  year      = {2019},
  publisher = {ACM}
}

@inproceedings {298276,
author = {Meng Hao and Hanxiao Chen and Hongwei Li and Chenkai Weng and Yuan Zhang and Haomiao Yang and Tianwei Zhang},
title = {Scalable Zero-knowledge Proofs for Non-linear Functions in Machine Learning},
booktitle = {33rd USENIX Security Symposium (USENIX Security 24)},
year = {2024},
isbn = {978-1-939133-44-1},
address = {Philadelphia, PA},
pages = {3819--3836},
url = {https://www.usenix.org/conference/usenixsecurity24/presentation/hao-meng-scalable},
publisher = {USENIX Association},
month = aug
}

@article{gpt,
  title={Language Models are Unsupervised Multitask Learners},
  author={Radford, Alec and Wu, Jeffrey and Child, Rewon and Luan, David and Amodei, Dario and Sutskever, Ilya},
  journal={OpenAI blog},
  volume={1},
  number={8},
  pages={9},
  year={2019}
}

@inproceedings{blendysc,
author = {Baweja, Anubhav and Chiesa, Alessandro and Fedele, Elisabetta and Fenzi, Giacomo and Mishra, Pratyush and Mopuri, Tushar and Zitek-Estrada, Andrew},
title = {Time-Space Trade-Offs for Sumcheck},
year = {2025},
isbn = {978-3-032-12289-6},
publisher = {Springer-Verlag},
address = {Berlin, Heidelberg},
url = {https://doi.org/10.1007/978-3-032-12290-2_2},
doi = {10.1007/978-3-032-12290-2_2},
booktitle = {Theory of Cryptography: 23rd International Conference, TCC 2025, Aarhus, Denmark, December 1–5, 2025, Proceedings, Part IV},
pages = {37–70},
numpages = {34},
location = {Aarhus, Denmark}
}

@inproceedings{pippenger1976evaluation,
  title={On the Evaluation of Powers and Related Problems},
  author={Pippenger, Nicholas},
  booktitle={17th Annual Symposium on Foundations of Computer Science (SFCS 1976)},
  pages={258--263},
  year={1976},
  organization={IEEE}
}

@misc{vgg16,
      title={Very Deep Convolutional Networks for Large-Scale Image Recognition}, 
      author={Karen Simonyan and Andrew Zisserman},
      year={2015},
      eprint={1409.1556},
      archivePrefix={arXiv},
      primaryClass={cs.CV},
      url={https://arxiv.org/abs/1409.1556}, 
}

@inproceedings{alexnet,
  title={ImageNet Classification with Deep Convolutional Neural Networks},
  author={Krizhevsky, Alex and Sutskever, Ilya and Hinton, Geoffrey E},
  booktitle={Advances in Neural Information Processing Systems},
  volume={25},
  pages={1097--1105},
  year={2012}
}

@inproceedings{chiesa2020marlin,
  title={Marlin: Preprocessing zkSNARKs with Universal and Updatable SRS},
  author={Chiesa, Alessandro and Hu, Yuncong and Maller, Mary and Mishra, Pratyush and Vesely, Noah and Ward, Nicholas},
  booktitle={Advances in Cryptology--EUROCRYPT 2020: 39th Annual International Conference on the Theory and Applications of Cryptographic Techniques, Zagreb, Croatia, May 10--14, 2020, Proceedings, Part I 39},
  pages={738--768},
  year={2020},
  organization={Springer}
}

@article{lee2024vcnn,
  author={Lee, Seunghwan and Ko, Hankyung and Kim, Jihye and Oh, Hyunok},
  title={{vCNN}: Verifiable Convolutional Neural Network Based on zk-{SNARKs}},
  journal={IEEE Transactions on Dependable and Secure Computing},
  year={2024},
  volume={21},
  number={4},
  pages={4254-4270},
  doi={10.1109/TDSC.2023.3348760}
}

@inproceedings{vu_hybrid_2013,
	address = {Berkeley, CA},
	title = {A {Hybrid} {Architecture} for {Interactive} {Verifiable} {Computation}},
	isbn = {9780769549774 9781467361668},
	url = {http://ieeexplore.ieee.org/document/6547112/},
	doi = {10.1109/SP.2013.48},
	urldate = {2026-01-27},
	booktitle = {2013 {IEEE} {Symposium} on {Security} and {Privacy}},
	publisher = {IEEE},
	author = {Vu, V. and Setty, S. and Blumberg, A. J. and Walfish, M.},
	month = may,
	year = {2013},
	pages = {223--237},
}

@misc{ezkl,
	title = {zkonduit/ezkl},
	url = {https://github.com/zkonduit/ezkl},
	urldate = {2026-01-27},
	publisher = {Zkonduit},
	month = jan,
	year = {2026},
	note = {original-date: 2022-07-05T19:54:03Z},
}

@misc{cryptoeprint:2024/162,
      author = {Kasra Abbaszadeh and Christodoulos Pappas and Jonathan Katz and Dimitrios Papadopoulos},
      title = {Zero-Knowledge Proofs of Training for Deep Neural Networks},
      howpublished = {Cryptology {ePrint} Archive, Paper 2024/162},
      year = {2024},
      url = {https://eprint.iacr.org/2024/162}
}

@INPROCEEDINGS {11023367,
author = { Datta, Trisha and Chen, Binyi and Boneh, Dan },
booktitle = { 2025 IEEE Symposium on Security and Privacy (SP) },
title = {{VerITAS: Verifying Image Transformations at Scale}},
year = {2025},
volume = {},
ISSN = {},
pages = {4606-4623},
doi = {10.1109/SP61157.2025.00097},
url = {https://doi.ieeecomputersociety.org/10.1109/SP61157.2025.00097},
publisher = {IEEE Computer Society},
address = {Los Alamitos, CA, USA},
month =May}

@inproceedings{cormode_practical_2012,
	address = {Cambridge Massachusetts},
	title = {Practical verified computation with streaming interactive proofs},
	isbn = {9781450311151},
	url = {https://dl.acm.org/doi/10.1145/2090236.2090245},
	doi = {10.1145/2090236.2090245},
	language = {en},
	urldate = {2026-01-27},
	booktitle = {Proceedings of the 3rd {Innovations} in {Theoretical} {Computer} {Science} {Conference}},
	publisher = {ACM},
	author = {Cormode, Graham and Mitzenmacher, Michael and Thaler, Justin},
	month = jan,
	year = {2012},
	pages = {90--112},
}

@inproceedings{full_account,
author = {Wahby, Riad S. and Ji, Ye and Blumberg, Andrew J. and Shelat, Abhi and Thaler, Justin and Walfish, Michael and Wies, Thomas},
title = {Full Accounting for Verifiable Outsourcing},
year = {2017},
isbn = {9781450349468},
publisher = {Association for Computing Machinery},
address = {New York, NY, USA},
url = {https://doi.org/10.1145/3133956.3133984},
doi = {10.1145/3133956.3133984},
booktitle = {Proceedings of the 2017 ACM SIGSAC Conference on Computer and Communications Security},
pages = {2071–2086},
numpages = {16},
location = {Dallas, Texas, USA},
series = {CCS '17}
}

@inproceedings{10.1145/3774934.3786448,
author = {Zhang, Zhiyuan and Cai, Yanxin and Yin, Wenhao and Wu, Xueyu and Wang, Yi and Ju, Lei and Ji, Zhuoran},
title = {Pipelonk: Accelerating End-to-End Zero-Knowledge Proof Generation on GPUs for PLONK-Based Protocols},
year = {2026},
isbn = {9798400723100},
publisher = {Association for Computing Machinery},
address = {New York, NY, USA},
url = {https://doi.org/10.1145/3774934.3786448},
doi = {10.1145/3774934.3786448},
booktitle = {Proceedings of the 31st ACM SIGPLAN Annual Symposium on Principles and Practice of Parallel Programming},
pages = {439–451},
numpages = {13},
location = {Sydney, NSW, Australia},
series = {PPoPP '26}
}

@inproceedings{10.1145/3658644.3690318,
author = {Pappas, Christodoulos and Papadopoulos, Dimitrios},
title = {Sparrow: Space-Efficient zkSNARK for Data-Parallel Circuits and Applications to Zero-Knowledge Decision Trees},
year = {2024},
isbn = {9798400706363},
publisher = {Association for Computing Machinery},
address = {New York, NY, USA},
url = {https://doi.org/10.1145/3658644.3690318},
doi = {10.1145/3658644.3690318},
booktitle = {Proceedings of the 2024 on ACM SIGSAC Conference on Computer and Communications Security},
pages = {3110–3124},
numpages = {15},
location = {Salt Lake City, UT, USA},
series = {CCS '24}
}

@misc{qu_verfcnn_2025,
	title = {{VerfCNN}, {Optimal} {Complexity} {zkSNARK} for {Convolutional} {Neural} {Networks}},
	url = {https://eprint.iacr.org/2025/2020},
	urldate = {2026-01-27},
	author = {Qu, Wenjie and Guo, Yanpei and Ying, Yue and Zhang, Jiaheng},
	year = {2025},
	note = {Publication info: Published elsewhere. Minor revision. IEEE S\&P 2026},
}

@misc{zkpytorch,
	title = {{zkPyTorch}: {A} {Hierarchical} {Optimized} {Compiler} for {Zero}-{Knowledge} {Machine} {Learning}},
	shorttitle = {{zkPyTorch}},
	url = {https://eprint.iacr.org/2025/535},
	urldate = {2026-01-27},
	author = {Xie, Tiancheng and Lu, Tao and Fang, Zhiyong and Wang, Siqi and Zhang, Zhenfei and Jia, Yongzheng and Song, Dawn and Zhang, Jiaheng},
	year = {2025},
	note = {Publication info: Preprint.},
}

@inproceedings{ml_modular,
author = {Balb\'{a}s, David and Fiore, Dario and Gonz\'{a}lez Vasco, Maria Isabel and Robissout, Damien and Soriente, Claudio},
title = {Modular Sumcheck Proofs with Applications to Machine Learning and Image Processing},
year = {2023},
isbn = {9798400700507},
publisher = {Association for Computing Machinery},
address = {New York, NY, USA},
url = {https://doi.org/10.1145/3576915.3623160},
doi = {10.1145/3576915.3623160},
booktitle = {Proceedings of the 2023 ACM SIGSAC Conference on Computer and Communications Security},
pages = {1437–1451},
numpages = {15},
location = {Copenhagen, Denmark},
series = {CCS '23}
}

@misc{cryptoeprint:2025/1117,
      author = {Suyash Bagad and Quang Dao and Yuval Domb and Justin Thaler},
      title = {Speeding Up Sum-Check Proving},
      howpublished = {Cryptology {ePrint} Archive, Paper 2025/1117},
      year = {2025},
      url = {https://eprint.iacr.org/2025/1117}
}

@inproceedings{virgo,
author = {Zhang, Jiaheng and Liu, Tianyi and Wang, Weijie and Zhang, Yinuo and Song, Dawn and Xie, Xiang and Zhang, Yupeng},
title = {Doubly Efficient Interactive Proofs for General Arithmetic Circuits with Linear Prover Time},
year = {2021},
isbn = {9781450384544},
publisher = {Association for Computing Machinery},
address = {New York, NY, USA},
url = {https://doi.org/10.1145/3460120.3484767},
doi = {10.1145/3460120.3484767},
booktitle = {Proceedings of the 2021 ACM SIGSAC Conference on Computer and Communications Security},
pages = {159–177},
numpages = {19},
location = {Virtual Event, Republic of Korea},
series = {CCS '21}
}

@inproceedings{ghodsi2017safetynets,
  title={SafetyNets: Verifiable Execution of Deep Neural Networks on an Untrusted Cloud},
  author={Ghodsi, Zahra and Gu, Tianyu and Garg, Siddharth},
  booktitle={Advances in Neural Information Processing Systems},
  volume={30},
  pages={4672--4681},
  year={2017}
}

@misc{hydra,
      author = {William Zhang and Yu Xia},
      title = {Hydra: Succinct Fully Pipelineable Interactive Arguments of Knowledge},
      howpublished = {Cryptology {ePrint} Archive, Paper 2021/641},
      year = {2021},
      url = {https://eprint.iacr.org/2021/641}
}

@article{esteva2019guide,
  author    = {Esteva, Andre and Robicquet, Alexandre and Ramsundar, Bharath and Kuleshov, Volodymyr and DePristo, Mark and Chou, Katherine and Cui, Claire and Corrado, Greg and Thrun, Sebastian and Dean, Jeff},
  title     = {A Guide to Deep Learning in Healthcare},
  journal   = {Nature Medicine},
  volume    = {25},
  number    = {1},
  pages     = {24--29},
  year      = {2019},
  publisher = {Nature Publishing Group}
}

@article{abdallah2016fraud,
  author    = {Abdallah, Aisha and Maarof, Mohd Aizaini and Zainal, Anazida},
  title     = {Fraud Detection System: A Survey},
  journal   = {Journal of Network and Computer Applications},
  volume    = {68},
  pages     = {90--113},
  year      = {2016},
  publisher = {Elsevier}
}

@article{kuo2019deep,
  author    = {Kuo, Kevin},
  title     = {DeepTriangle: A Deep Learning Approach to Loss Reserving},
  journal   = {Risks},
  volume    = {7},
  number    = {3},
  pages     = {97},
  year      = {2019},
  publisher = {MDPI}
}

@misc{feng2021zen,
  author       = {Boyuan Feng and Lianke Qin and Zhenfei Zhang and Yufei Ding and Shumo Chu},
  title        = {ZEN: Efficient Zero-Knowledge Proofs for Neural Networks},
  howpublished = {Cryptology ePrint Archive, Paper 2021/087},
  year         = {2021},
  url          = {https://eprint.iacr.org/2021/087}
}

@inproceedings{feng2024zeno,
  author    = {Boyuan Feng and Zheng Wang and Yuke Wang and Shu Yang and Yufei Ding},
  title     = {ZENO: A Type-based Optimization Framework for Zero-Knowledge Neural Network Inference},
  booktitle = {Proceedings of the 29th ACM International Conference on Architectural Support for Programming Languages and Operating Systems (ASPLOS '24)},
  year      = {2024},
  pages     = {450--464},
  publisher = {ACM},
  doi       = {10.1145/3617232.3624852}
}

@misc{li2025zkpog,
      author = {Muyang Li and Yueteng Yu and Bangyan Wang and Xiong Fan and Shuwen Deng},
      title = {{ZKPoG}: Accelerating {WitGen}-Incorporated End-to-End Zero-Knowledge Proof on {GPU}},
      howpublished = {Cryptology {ePrint} Archive, Paper 2025/765},
      year = {2025},
      url = {https://eprint.iacr.org/2025/765}
}

@inproceedings{samardzic2024nocap,
  author={Samardzic, Nikola and Langowski, Simon and Devadas, Srinivas and Sanchez, Daniel},
  booktitle={2024 57th IEEE/ACM International Symposium on Microarchitecture (MICRO)}, 
  title={Accelerating Zero-Knowledge Proofs Through Hardware-Algorithm Co-Design}, 
  year={2024},
  volume={},
  number={},
  pages={366-379},
  doi={10.1109/MICRO61859.2024.00035}}

@inproceedings{daftardar2024szkp,
  author    = {Alhad Daftardar and Shikhar Kumar and Yilong Li and Brandon Reagen and Siddharth Garg},
  title     = {SZKP: A Scalable Accelerator Architecture for Zero-Knowledge Proofs},
  booktitle = {Proceedings of the 33rd International Conference on Parallel Architectures and Compilation Techniques (PACT '24)},
  year      = {2024},
  pages     = {271--283},
  publisher = {ACM},
  doi       = {10.1145/3656019.3676898}
}

@inproceedings{wang2025unizk,
  author    = {Cheng Wang and Mingyu Gao},
  title     = {UniZK: Accelerating Zero-Knowledge Proof with Unified Hardware and Flexible Kernel Mapping},
  booktitle = {Proceedings of the 30th ACM International Conference on Architectural Support for Programming Languages and Operating Systems (ASPLOS '25)},
  year      = {2025},
  publisher = {ACM},
  doi       = {10.1145/3669940.3707228}
}

@inproceedings{daftardar2025zkspeed,
  author    = {Alhad Daftardar and Jianqiao Mo and Joey Ah-kiow and Benedikt B{\"u}nz and Ramesh Karri and Siddharth Garg and Brandon Reagen},
  title     = {Need for zkSpeed: Accelerating HyperPlonk for Zero-Knowledge Proofs},
  booktitle = {Proceedings of the 52nd Annual International Symposium on Computer Architecture (ISCA '25)},
  year      = {2025},
  pages     = {1986--2001},
  publisher = {ACM},
  doi       = {10.1145/3695053.3731021}
}

@InProceedings{bn254,
author="Barreto, Paulo S. L. M.
and Naehrig, Michael",
editor="Preneel, Bart
and Tavares, Stafford",
title="Pairing-Friendly Elliptic Curves of Prime Order",
booktitle="Selected Areas in Cryptography",
year="2006",
publisher="Springer Berlin Heidelberg",
address="Berlin, Heidelberg",
pages="319--331",
isbn="978-3-540-33109-4"
}

@misc{zksumcheck,
  author = {Alessandro Chiesa and Michael A. Forbes and Nicholas Spooner},
  title = {A Zero Knowledge Sumcheck and its Applications},
  howpublished = {Cryptology ePrint Archive, Paper 2017/305},
  year = {2017},
  url = {https://eprint.iacr.org/2017/305}
}

@InProceedings{fs_attack,
author="Khovratovich, Dmitry
and Rothblum, Ron D.
and Soukhanov, Lev",
editor="Tauman Kalai, Yael
and Kamara, Seny F.",
title="How to Prove False Statements: Practical Attacks on Fiat-Shamir",
booktitle="Advances in Cryptology -- CRYPTO 2025",
year="2025",
publisher="Springer Nature Switzerland",
address="Cham",
pages="3--26",
isbn="978-3-032-01887-8"
}

@inproceedings{fiat1986prove,
  title={How to prove yourself: Practical solutions to identification and signature problems},
  author={Fiat, Amos and Shamir, Adi},
  booktitle={Conference on the Theory and Application of Cryptographic Techniques},
  pages={186--194},
  year={1986},
  organization={Springer}
}

@article{Thaler2022,
  author = {Justin Thaler},
  title = {Proofs, Arguments, and Zero-Knowledge},
  journal = {Foundations and Trends® in Privacy and Security},
  volume = {4},
  number = {2--4},
  pages = {117--660},
  year = {2022},
  publisher = {Now Publishers},
  doi = {10.1561/3300000030},
  url = {http://dx.doi.org/10.1561/3300000030}
}

@article{GKR15,
  author = {Shafi Goldwasser and Yael Tauman Kalai and Guy N. Rothblum},
  title = {Delegating Computation: Interactive Proofs for Muggles},
  journal = {Journal of the ACM},
  volume = {62},
  number = {4},
  pages = {27:1--27:64},
  year = {2015},
  publisher = {ACM},
  doi = {10.1145/2699436}
}

@inproceedings{Groth16,
  author = {Jens Groth},
  title = {On the Size of Pairing-Based Non-interactive Arguments},
  booktitle = {Advances in Cryptology -- EUROCRYPT 2016},
  series = {Lecture Notes in Computer Science},
  volume = {9666},
  pages = {305--326},
  publisher = {Springer},
  year = {2016},
  doi = {10.1007/978-3-662-49896-5_11}
}

@article{sumcheck,
  author = {Carsten Lund and Lance Fortnow and Howard Karloff and Noam Nisan},
  title = {Algebraic Methods for Interactive Proof Systems},
  journal = {Journal of the ACM},
  volume = {39},
  number = {4},
  pages = {859--868},
  year = {1992},
  month = {October},
  doi = {10.1145/146585.146605}
}

@inproceedings{spartan,
  title={Spartan: Efficient and General-Purpose zkSNARKs Without Trusted Setup},
  author={Srinath T. V. Setty},
  booktitle={Advances in Cryptology -- CRYPTO 2020},
  series={Lecture Notes in Computer Science},
  volume={12172},
  pages={704--737},
  year={2020},
  publisher={Springer},
  doi={10.1007/978-3-030-56877-1_25}
}

@misc{zktorch,
      title={ZKTorch: Compiling ML Inference to Zero-Knowledge Proofs via Parallel Proof Accumulation}, 
      author={Bing-Jyue Chen and Lilia Tang and Daniel Kang},
      year={2025},
      eprint={2507.07031},
      archivePrefix={arXiv},
      primaryClass={cs.CR},
      url={https://arxiv.org/abs/2507.07031}, 
}

@misc{cai2025gettingpayforauditing,
      title={Are You Getting What You Pay For? Auditing Model Substitution in LLM APIs}, 
      author={Will Cai and Tianneng Shi and Xuandong Zhao and Dawn Song},
      year={2025},
      eprint={2504.04715},
      archivePrefix={arXiv},
      primaryClass={cs.CL},
      url={https://arxiv.org/abs/2504.04715}, 
}

@misc{gpt_cost,
      title={The rising costs of training frontier AI models}, 
      author={Ben Cottier and Robi Rahman and Loredana Fattorini and Nestor Maslej and Tamay Besiroglu and David Owen},
      year={2025},
      eprint={2405.21015},
      archivePrefix={arXiv},
      primaryClass={cs.CY},
      url={https://arxiv.org/abs/2405.21015}, 
}

@article{MATOS2026101001,
title = {Critical appraisal of fairness metrics for artificial intelligence-based clinical prediction models: a scoping review},
journal = {The Lancet Digital Health},
pages = {101001},
year = {2026},
issn = {2589-7500},
doi = {https://doi.org/10.1016/j.landig.2026.101001},
url = {https://www.sciencedirect.com/science/article/pii/S2589750026000245},
author = {João Matos and Ben {Van Calster} and Leo Anthony Celi and Paula Dhiman and Judy Wawira Gichoya and Richard D Riley and Chris Russell and Sara Khalid and Gary S Collins}
}

@inproceedings{libra,
  author    = {Tiancheng Xie and
               Jiaheng Zhang and
               Yupeng Zhang and
               Charalampos Papamanthou and
               Dawn Song},
  title     = {Libra: Succinct Zero-Knowledge Proofs with Optimal Prover Computation},
  booktitle = {Advances in Cryptology -- {CRYPTO} 2019},
  series    = {Lecture Notes in Computer Science},
  volume    = {11694},
  pages     = {733--764},
  publisher = {Springer},
  year      = {2019},
  doi       = {10.1007/978-3-030-26954-8_24}
}

@inproceedings{zkgpt,
  author = {Wenjie Qu and Yijun Sun and Xuanming Liu and Tao Lu and Yanpei Guo and Kai Chen and Jiaheng Zhang},
  title = {{zkGPT}: An Efficient Non-interactive Zero-knowledge Proof Framework for {LLM} Inference},
  booktitle = {34th USENIX Security Symposium (USENIX Security 25)},
  year = {2025},
  url = {https://www.usenix.org/conference/usenixsecurity25/presentation/qu-zkgpt}
}

@inproceedings{hyrax,
  title={Doubly-Efficient zkSNARKs Without Trusted Setup},
  author={Riad S. Wahby and Ioanna Tzialla and abhi shelat and Justin Thaler and Michael Walfish},
  booktitle={2018 IEEE Symposium on Security and Privacy (SP)},
  pages={926--943},
  year={2018},
  doi={10.1109/SP.2018.00060}
}

@inproceedings{campanelli2019legosnark,
  author    = {Matteo Campanelli and Dario Fiore and Anais Querol},
  title     = {LegoSNARK: Modular Design and Composition of Succinct Zero-Knowledge Proofs},
  booktitle = {Proceedings of the 2019 ACM SIGSAC Conference on Computer and Communications Security (CCS)},
  year      = {2019},
  pages     = {2075--2092}
}

@inproceedings{lasso,
    author    = {Setty, Srinath and Thaler, Justin and Wahby, Riad},
    title     = {Unlocking the Lookup Singularity with Lasso},
    booktitle = {Advances in Cryptology -- EUROCRYPT 2024},
    year      = {2024},
    publisher = {Springer},
    address   = {Cham}
}

@inproceedings{ma2023gzkp,
author = {Ma, Weiliang and Xiong, Qian and Shi, Xuanhua and Ma, Xiaosong and Jin, Hai and Kuang, Haozhao and Gao, Mingyu and Zhang, Ye and Shen, Haichen and Hu, Weifang},
title = {GZKP: A GPU Accelerated Zero-Knowledge Proof System},
year = {2023},
isbn = {9781450399166},
publisher = {Association for Computing Machinery},
address = {New York, NY, USA},
url = {https://doi.org/10.1145/3575693.3575711},
doi = {10.1145/3575693.3575711},
booktitle = {Proceedings of the 28th ACM International Conference on Architectural Support for Programming Languages and Operating Systems, Volume 2},
pages = {340–353},
numpages = {14},
location = {Vancouver, BC, Canada},
series = {ASPLOS 2023}
}

@misc{kthiri2024icicle,
  author       = {Kthiri, Mohamed},
  title        = {Benchmarking GPU Acceleration for ZK-SNARKs with Icicle},
  howpublished = {\url{https://www.maya-zk.com/blog/gpu-acceleration}},
  year         = {2024},
  note         = {Accessed: 2025-08-30}
}

@misc{apollo,
      title={Artemis: Efficient Commit-and-Prove SNARKs for zkML}, 
      author={Hidde Lycklama and Alexander Viand and Nikolay Avramov and Nicolas Küchler and Anwar Hithnawi},
      year={2025},
      eprint={2409.12055},
      archivePrefix={arXiv},
      primaryClass={cs.CR},
      url={https://arxiv.org/abs/2409.12055}, 
}

@inproceedings{zkcnn,
  author = {Liu, Tianyi and Xie, Xiang and Zhang, Yupeng},
  title = {{zkCNN}: Zero Knowledge Proofs for Convolutional Neural Network Predictions and Accuracy},
  booktitle = {Proceedings of the 2021 ACM SIGSAC Conference on Computer and Communications Security},
  series = {CCS '21},
  year = {2021},
  pages = {2968--2985},
  doi = {10.1145/3460120.3485379},
  url = {https://eprint.iacr.org/2021/673}
}

@inproceedings{zkllm,
author = {Sun, Haochen and Li, Jason and Zhang, Hongyang},
title = {zkLLM: Zero Knowledge Proofs for Large Language Models},
year = {2024},
isbn = {9798400706363},
publisher = {Association for Computing Machinery},
address = {New York, NY, USA},
url = {https://doi.org/10.1145/3658644.3670334},
doi = {10.1145/3658644.3670334},
booktitle = {Proceedings of the 2024 on ACM SIGSAC Conference on Computer and Communications Security},
pages = {4405–4419},
numpages = {15},
location = {Salt Lake City, UT, USA},
series = {CCS '24}
}

@inproceedings{zkml,
  author = {Kang, Daniel and Hashimoto, Tatsunori and Stoica, Ion and Sun, Yi},
  title = {{ZKML}: An Optimizing System for {ML} Inference in Zero-Knowledge Proofs},
  booktitle = {Proceedings of the Nineteenth European Conference on Computer Systems},
  series = {EuroSys '24},
  year = {2024},
  month = apr,
  location = {Athens, Greece},
  pages = {622--636},
  doi = {10.1145/3627703.3650088}
}

@article{goldwasser1989knowledge,
  title={The knowledge complexity of interactive proof systems},
  author={Goldwasser, Shafi and Micali, Silvio and Rackoff, Charles},
  journal={SIAM Journal on Computing},
  volume={18},
  number={1},
  pages={186--208},
  year={1989},
  publisher={SIAM}
}

@techreport{sha3,
	title = {{SHA}-3 standard : permutation-based hash and extendable-output functions},
	shorttitle = {{SHA}-3 standard},
	url = {https://nvlpubs.nist.gov/nistpubs/FIPS/NIST.FIPS.202.pdf},
	urldate = {2026-01-24},
	institution = {National Institute of Standards and Technology (U.S.)},
	author = {{National Institute of Standards and Technology (US)}},
	year = {2015},
	doi = {10.6028/NIST.FIPS.202},
}

@misc{bls12381,
  author = {Bowe, Sean},
  title = {{BLS12-381}: New zk-{SNARK} Elliptic Curve Construction},
  howpublished = {Zcash Blog},
  year = {2017},
  month = mar,
  url = {https://electriccoin.co/blog/new-snark-curve/}
}

@inproceedings{campanelli2021lunar,
author = {Campanelli, Matteo and Faonio, Antonio and Fiore, Dario and Querol, Ana\"{\i}s and Rodr\'{\i}guez, Hadri\'{a}n},
title = {Lunar: A Toolbox for More Efficient Universal and Updatable zkSNARKs and Commit-and-Prove Extensions},
year = {2021},
isbn = {978-3-030-92077-7},
publisher = {Springer-Verlag},
address = {Berlin, Heidelberg},
url = {https://doi.org/10.1007/978-3-030-92078-4_1},
doi = {10.1007/978-3-030-92078-4_1},
booktitle = {Advances in Cryptology – ASIACRYPT 2021: 27th International Conference on the Theory and Application of Cryptology and Information Security, Singapore, December 6–10, 2021, Proceedings, Part III},
pages = {3–33},
numpages = {31},
location = {Singapore, Singapore}
}

@article{Gabizon2019PLONKPO,
  title={PLONK: Permutations over Lagrange-bases for Oecumenical Noninteractive arguments of Knowledge},
  author={Ariel Gabizon and Zachary J. Williamson and Oana-Madalina Ciobotaru},
  journal={IACR Cryptol. ePrint Arch.},
  year={2019},
  volume={2019},
  pages={953},
  url={https://api.semanticscholar.org/CorpusID:201685538}
}

@misc{mcl,
	title = {herumi/mcl},
	copyright = {BSD-3-Clause},
	url = {https://github.com/herumi/mcl},
	urldate = {2026-01-24},
	author = {Shigeo, MITSUNARI},
	month = dec,
	year = {2025},
	note = {original-date: 2015-05-05T00:18:39Z},
}

@online{gkr_chunking,
  author       = {{Polyhedra Network}},
  title        = {{GKR} Input Layer Chunks},
  year         = {n.d.},
  url          = {https://docs.polyhedra.network/expander/prover_internals/input_chunks/},
  urldate      = {2026-01-23},
  organization = {Polyhedra Network Documentation}
}

@INPROCEEDINGS{resnet,
  author={He, Kaiming and Zhang, Xiangyu and Ren, Shaoqing and Sun, Jian},
  booktitle={2016 IEEE Conference on Computer Vision and Pattern Recognition (CVPR)}, 
  title={Deep Residual Learning for Image Recognition}, 
  year={2016},
  volume={},
  number={},
  pages={770-778},
  doi={10.1109/CVPR.2016.90}}
